\def\phi{\varphi}
\def\epsilon{\varepsilon}
\newtheorem{theorem}{Theorem}
\newtheorem{prop}[theorem]{Proposition}
\newtheorem{lemma}[theorem]{Lemma}
\newcommand{\RR}{\mathbb{R}}
\newcommand{\CC}{\mathbb{C}}
\begin{document}
\title{Hamiltonian Monodromy via spectral Lax pairs}
\author{G. J. Gutierrez Guillen\footnote{Institut de Math\'ematiques de Bourgogne - UMR 5584 CNRS, Universit\'e de Bourgogne-Franche-Comt\'e,
	9 avenue Alain Savary, BP 47870, 21078 DIJON, FRANCE}, D. Sugny\footnote{Laboratoire Interdisciplinaire Carnot de Bourgogne (ICB), UMR 6303 CNRS-Universit\'e Bourgogne-Franche Comt\'e, 9 Av. A.
Savary, BP 47 870, F-21078 Dijon Cedex, France; dominique.sugny@u-bourgogne.fr}, P. Marde\v si\'c\footnote{Institut de Math\'ematiques de Bourgogne - UMR 5584 CNRS, Universit\'e de Bourgogne-Franche-Comt\'e,
	9 avenue Alain Savary, BP 47870, 21078 DIJON, FRANCE; pavao.mardesic@u-bourgogne.fr}}

\maketitle

\begin{abstract}
Hamiltonian Monodromy is the simplest topological obstruction to the existence of global action-angle coordinates in a completely integrable system. We show that this property can be studied in a neighborhood of a focus-focus singularity by a spectral Lax pair approach. From the Lax pair, we derive a Riemann surface which allows us to compute in a straightforward way the corresponding Monodromy matrix. The general results are applied to the Jaynes-Cummings model and the spherical pendulum.
\end{abstract}


\section{Introduction}

Hamiltonian integrable systems with a finite number of degrees of freedom have a long history going back from Liouville in the mid-nineteenth century to Arnold one hundred years later~\cite{anorldbook}. A modern description of integrable systems was formulated in the last decades in terms of Lax pairs~\cite{lax1968,babelon}. In this context, a Lax pair consists of two matrix-valued functions on the phase space satisfying a differential equation equivalent to the Hamiltonian dynamics. When one is able to derive such a Lax pair, this approach is a powerful tool to find the constants of motion of the integrable system. On the other side, geometric or topological properties are known to provide valuable insights on the dynamics and the structure of such complex systems~\cite{anorldbook,bolsinovbook}. The richness of this approach for integrable systems is illustrated by the concept of Hamiltonian Monodromy (HM) which was introduced by Duistermaat in 1980~\cite{duistermaat}. HM is the simplest topological obstruction to the existence of global action-angle coordinates in a completely integrable Hamiltonian system. In a two-degree of freedom system, a non-trivial Monodromy can be observed if the set of regular values of the energy-momentum map is not simply connected but the image of this mapping, called the bifurcation diagram, has an isolated singular point corresponding to a focus-focus singularity~\cite{zung1997}. Non-trivial HM has been found in a wide variety of mechanical systems such as the spherical pendulum or the champagne bottle~\cite{cushmannbook,efstathioubook,efstathiou2017b,bates1991}. This concept has been generalized to nonstandard forms of Monodromy extending from fractional Monodromy~\cite{nekhoroshev2006,giacobbe2008,efstathiou2013} and bidromy~\cite{sadovskii2007,efstathiou2010} which characterize a line of non isolated weak singularities of the bifurcation diagram, to scattering Monodromy for non compact dynamics~\cite{bates2007,dullin2008} and dynamical Monodromy, a property that can be observed with nonautonomous systems~\cite{delos2009,delos2018}. We refer the reader to recent review papers for details about this active research topic~\cite{sadovskii2016,efstathiou2021}. The quantum analogue of HM was formulated mathematically few years later~\cite{cusman1988,san1999,guillemin1989} and was also at the origin of many studies in physics, as a way to describe the global structure of quantum spectra~\cite{child07,cushman2000,cushman2004,dullin2018,assemat2010,babelon2009,pelayo2012,sadovskii1999}.

For integrable systems with compact orbits, the phase space is fibered by tori or by disjoint union of tori over the regular values of the bifurcation diagram. HM describes the possible nontriviality of the torus bundle over a loop in the set of such regular values~\cite{cushmannbook,efstathioubook}. HM can be viewed as the holonomy of a connection of the bundle which allows one to transport a basis of the first homology group of the torus. HM is characterized by a matrix with integer coefficients, the Monodromy matrix, that corresponds to the transformation of this basis along the loop. In this study, we consider two degree of freedom system with a global circle action over the phase space~\cite{cushmannbook,efstathioubook}. In this case, HM can be studied by analyzing a function on the phase space, namely the rotation number $\Theta$. For a focus-focus singularity, the variation of $\Theta$ along the loop is equal to $2\pi$ and completely characterizes the nontrivial Monodromy of the corresponding Hamiltonian system. The computation of this function and its variation are therefore the building blocks for describing HM~\cite{cushmannbook,efstathioubook,efstathiou2017}. Note that more geometric approaches based on the global structure of the torus bundle can also be used to show the non-trivial Monodromy of a Hamiltonian system~\cite{efstathiou2020}.

A similar concept of Monodromy appears in complex geometry with the Picard-Lefschetz theory for Riemann surfaces~\cite{arnoldbook2,zoladekbook}. A natural question is to determine what relation may exist between these two forms of Monodromy. A first answer is given by scattering Monodromy in a hyperbolic oscillator for which a direct link can be established in a complex system with a $A_1$ singularity. Complex geometry can also be used as an efficient way to compute the Monodromy matrix. Analytical computations are easier in this approach and generally boil down to residue calculus, while a major difficulty is to keep track of the real nature of the problem.

This method can be applied in different forms. The first option consists in complexifying directly the real system and then studying the function $\Theta$, which can be expressed as an Abelian integral. The variation of $\Theta$ along a loop in the bifurcation diagram can be achieved from the Picard-Lefschetz Monodromy of the corresponding cycle. This strategy has, for instance, be used in the spherical pendulum in~\cite{beukers} and in resonant systems in~\cite{sugny2008}. A second approach is based on a Lax pair description, which leads naturally to a Riemann surface encoding all the dynamical information of the Hamiltonian system. Different attempts have been proposed in this direction for a series of examples ranging from the spherical pendulum~\cite{audin2002,gavrilov2002} to the Lagrange top~\cite{gavrilov1998,vivolo} and the Jaynes-Cummings model (or spin-oscillator system)~\cite{babelon2012,babelon2015}. These systems have the characteristics of having both a non-trivial Monodromy and known Lax pairs. However, some works about the spherical pendulum~\cite{audin2002,gavrilov2002} do not establish a direct link with the system dynamics and use sophisticated mathematical objects which are difficult to connect with standard proofs of HM. Other studies that focus on the JC model~\cite{babelon2012,babelon2015} do not compute the Monodromy matrix.

We propose in this paper to analyze this procedure in a direct and explicit way by showing in particular how to compute the variation of the rotation number from the Lax pair formalism. More precisely, starting from the Lax pair of the Hamiltonian system under study, we show that a complex reduced phase space can be defined as a set of Riemann surfaces depending on the constants of motion of the system. On the basis of the various known examples, we establish generic conditions that this complex fibration must satisfy in order to exhibit a nontrivial Monodromy. We derive a normal form for this family of Riemann surfaces to compute explicitly the Picard-Lefschetz Monodromy. In this setting, we show that the rotation number is expressed as an Abelian integral of a meromorphic one-form with a nonzero residue at infinity. The last step of our approach consists in combining the two results to get the Monodromy matrix. We then apply this general approach to the Jaynes-Cummings model and to the spherical pendulum. Since a non-trivial monodromy is not limited to systems for which a Lax pair is known, we introduce, what we call a quasi-Lax pair formalism. It corresponds to a Lax pair up to higher order terms valid in a neighborhood of focus-focus points. It can be derived for any Hamiltonian system locally around this singularity. Using the same procedure as above with some adaptations, we obtain the Monodromy matrix characterizing the focus-focus singularity. Some open questions and generalizations to higher-dimensional integrable systems are discussed in a final section.

The paper is organized as follows. The methodology and the main results are presented in Sec.~\ref{sec.metho}. Section~\ref{sec.tech} is dedicated to the analysis of the properties of the set of Riemann surfaces describing a Hamiltonian system with a focus-focus singularity. The proofs of the different results are presented in Sec.~\ref{sec.proof}. The general results of this study are applied to two examples in Sec.~\ref{sec.exam}, namely the Jaynes-Cummings model and the spherical pendulum. The quasi-Lax pair is introduced in Sec.~\ref{secquasi}. Conclusion and prospectives are given in Sec.~\ref{sec.conclu}. Complementary computations are reported in Appendices~\ref{appA} and \ref{appB}.

\section{Methodology and main results}\label{sec.metho}
We study completely integrable Hamiltonian systems in $\mathbb{R}^4$. Recall that the \textit{energy momentum map} is defined as
\begin{eqnarray*}
\mathcal{EM}\colon & & \RR^{4}\to\RR^2\\
& &(q,p)\mapsto (H(q,p),K(q,p)) ,
\end{eqnarray*}
where $H$ and $K$ are two independent constants of motion. This map defines a fibration and we assume that the fibers are compact and connected. \textit{The bifurcation diagram} is the image of the energy momentum map and \textit{the bifurcation set} of this mapping is the set in $\RR^2$ over which $\mathcal{EM}$ fails to be a locally trivial fibration. The Arnold-Liouville theorem gives the existence of \textit{action-angle} coordinates locally when $\mathcal{EM}$ is proper (outside of the bifurcation set). HM is an obstruction to the existence of global action-angle coordinates. In this paper, we study Hamiltonian Monodromy along loops around an isolated critical value of $\mathcal{EM}$, via spectral Lax pairs. More specifically, we derive a Riemann surface using the spectral Lax pair. Then, in Theorem~\ref{Teo.main}, we formulate a general result about integrals of meromorphic one-forms on this Riemann surface and, in Theorem~\ref{cor.main}, we give conditions under which the Monodromy is not trivial and calculate it using Theorem~\ref{Teo.main}. These conditions are obtained by studying well-known physical systems and their properties. Additionally, we present in Sec.~\ref{sec.exam} some of these systems (which have a focus-focus singularity) and show that Theorem~\ref{cor.main} applies.

Given a completely integrable system, a \textit{Lax pair} describing this system is a pair of square matrices $L$ and $M$ which fulfill the equation
\begin{equation*}
\dot{L}=[M,L] ,
\end{equation*}
and such that, these differential equations are equivalent to the equations of motion of the system. A \textit{spectral Lax pair} is a Lax pair for which the matrices $L(\lambda)$ and $M(\lambda)$ depend on an extra parameter $\lambda$ called \textit{spectral parameter}. If the dependence is polynomial then the spectral Lax pair defines a \textit{spectral curve} by $\det(L(\lambda)-\mu I )=0$, which is a polynomial in the variables $\lambda$ and $\mu$ such that all its coefficients are constants of motion.

We describe now the properties of the systems we consider. We focus on the study of Hamiltonian systems in $\RR^4$ with Hamiltonian $H$ and an $\mathbb{S}^1$-action generated by the second constant of motion $K$. We denote respectively by $h$, $k$ the values of $H$ and $K$. Since $K$ generates an $\mathbb{S}^1$-action, its flow corresponds to closed orbits of period $2\pi$. A reduced phase space can be defined over the regular values of the bifurcation diagram as the set of the orbits of $K$. Consider a point $(q,p)$ belonging to a regular torus $\mathcal{EM}^{-1}(h,k)$, and the closed orbit of $K$ starting at this point. Let $\gamma$ be the orbit of the flow generated by $H$ starting at $(q,p)$ and ending at the first point of intersection with the orbit of $K$ at time $T$ called the first return time (see Fig.~\ref{fig.gamma}). The image of $\gamma$ in the reduced phase space is a cycle $\delta$, which is represented in Fig.~\ref{fig.gamma}.

\begin{figure}[h!]
  \centering
    \includegraphics[scale=.2]{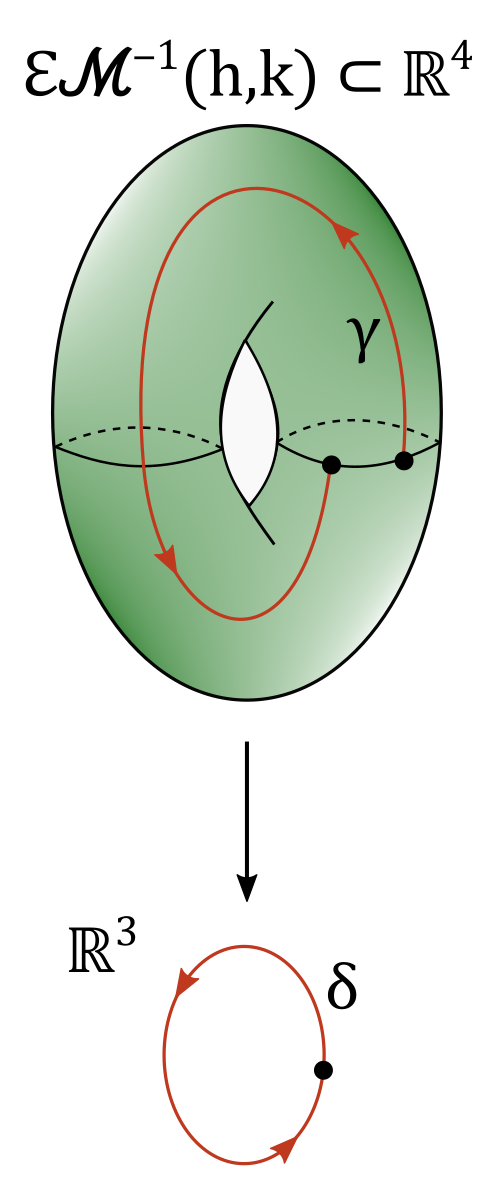}
    \caption{Projection $\delta$ of the Hamiltonian flow $\gamma$ onto the reduced phase space. The horizontal black circles on the torus represent the orbits of $K$. The black dots correspond to the intersection points between $\gamma$ and one orbit of $K$.}
    \label{fig.gamma}
\end{figure}

If we denote by $\theta$ the canonically conjugate angle to $K$ then the rotation number $\Theta$ can be defined as
\begin{equation*}
\Theta=\int_{\gamma}d\theta=\int_{0}^T\dot{\theta}dt .
\end{equation*}

We now assume that the system can be described by a spectral Lax pair, $L(\lambda)$, $M(\lambda)$, where $L$ and $M$ are complex $2\times 2$ matrices. Suppose that the matrix $L$ can be expressed as follows
\begin{equation}\label{eq.matL}
L(\lambda)=\begin{pmatrix}
A(\lambda) & B(\lambda) \\
C(\lambda) & -A(\lambda)
\end{pmatrix} ,
\end{equation}
where $A$, $B$ and $C$ are polynomial functions of the parameter $\lambda$, with $A$ of degree two, $B$ and $C$ of degree one. The spectral curve can be derived from the characteristic polynomial of $L$ and is given by the equation $\mu^2=A^2(\lambda)+B(\lambda)C(\lambda)=Q_{h,k}(\lambda)$, where $\mu$ denotes the eigenvalue of $L$ and $Q$ is a polynomial of degree four. Using the standard procedure of Lax pairs for defining separated variables~\cite{babelon}, we then introduce the functions $\tilde{\lambda}$ and $\tilde{\mu}$ respectively as the solution of the implicit equation $C(\lambda)=0$ and by the relation $\tilde{\mu}^2=A(\tilde{\lambda})^2$. This defines the multivalued mapping $\mathcal{L}\colon\mathbb{R}^4\to\CC^2\times\RR^2$ given by $\mathcal{L}(q,p)=(\tilde{\lambda}(q,p),\tilde{\mu}(q,p),\mathcal{EM}(q,p))$ that respects the fibers of the torus bundle. Note that the variables $\tilde\lambda$ and $\tilde\mu$ can be viewed as the coordinates of a complex reduced phase space with respect to the momentum $K$~\cite{babelon}. In particular, the orbits of $K$ are mapped to points in the image of $\mathcal{L}$ and the orbit of the Hamiltonian flow $\gamma$ to a cycle $\mathcal{L}(\gamma)$.

The different mappings are described by the commutative diagram given in Fig.~\ref{figCommDiag}.
\begin{figure}[h!]
  \centering
    \includegraphics[scale=0.55]{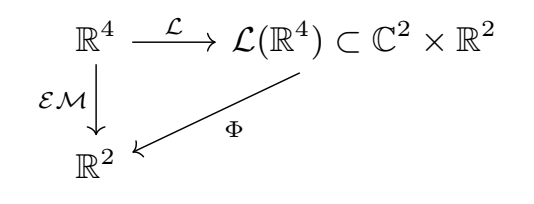}
    \caption{Commutative diagram between the mappings $\mathcal{EM}$, $\mathcal{L}$ and $\Phi$, where $\Phi$ is the projection from $\mathcal{L}(\mathbb{R}^4)\subset\mathbb{C}^2\times \mathbb{R}^2$ to $\RR^2$.}
    \label{figCommDiag}
\end{figure}
This diagram is represented schematically in Fig.~\ref{fig.diagram}. If $\mathcal{EM}$ is proper outside of the bifurcation set then, by the Arnold-Liouville theorem, the regular fibers of $\mathcal{EM}$ are real tori $\mathbb{T}^2$. In Sec.~\ref{sec.proof}, we prove that the fibers of $\Phi$ are contained in Riemann surfaces.
\begin{figure}[h!]
  \centering
    \includegraphics[scale=.25]{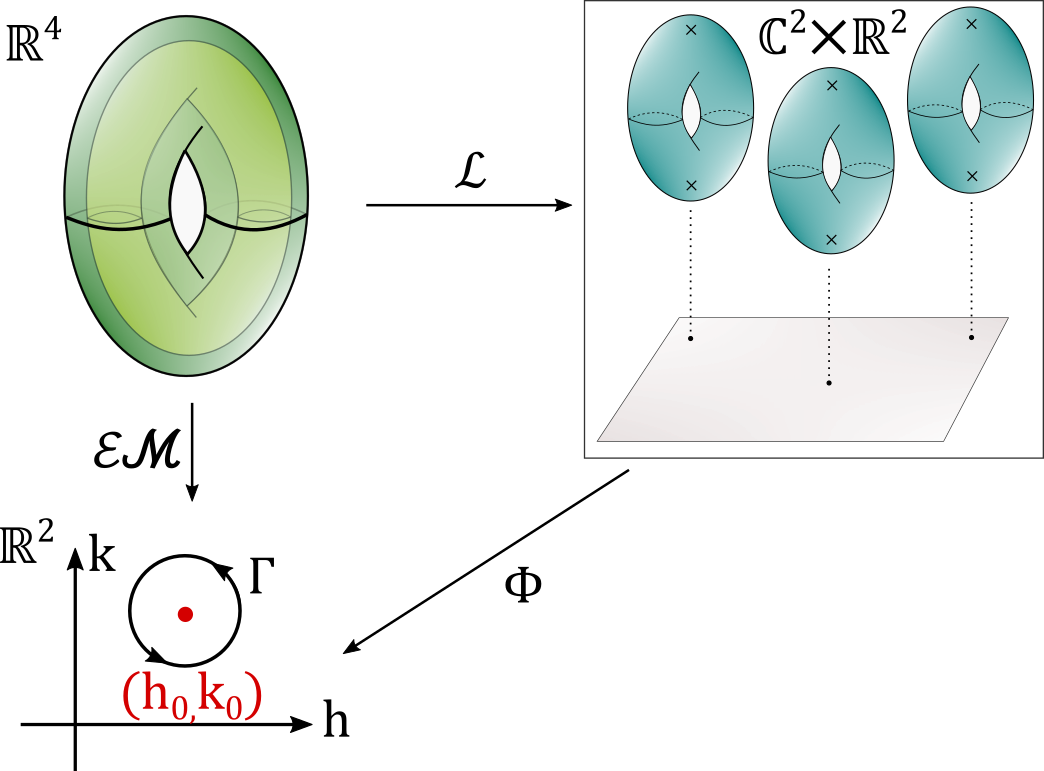}
    \caption{Schematic representation of the commutative diagram given in Fig.~\ref{figCommDiag}. The two upper panels describe respectively the fibration of the phase space by real tori $\mathbb{T}^2$ (left) and by complex tori (right). For each regular value $(h,k)$, this complex torus is a Riemann surface with two points at infinity (crosses). The lower panel depicts the image of $\mathcal{EM}$ in the space $(h,k)$. The red point indicates the position of the singular value $(h_0,k_0)$. The oriented circle represents the loop $\Gamma$ for which the Monodromy matrix is computed.}
    \label{fig.diagram}
\end{figure}

Let $(h_0,k_0)$ be an isolated critical value in the bifurcation diagram and let $\Gamma$ be a simple loop positively oriented contained in the set of regular values of $\mathcal{EM}$ around $(h_0,k_0)$, such that $(h_0,k_0)$ is the only critical value inside $\Gamma$. A first result about the structure of the Riemann surface is given by Theorem~\ref{Teo.main}.

\begin{theorem}\label{Teo.main}
Under the above hypotheses about the system, assume that $Q$ is a polynomial that has two non-real double roots for $(h,k)=(h_0,k_0)$ and fulfills the genericity condition~(G) described below.  Consider a one-form $\xi$ in the space $(\tilde{\lambda},\tilde{\mu})$ which can be written as $\xi=c_1\frac{\tilde{\lambda}d\tilde{\lambda}}{\tilde{\mu}}+c_2\frac{d\tilde{\lambda}}{\tilde{\mu}}$, $(c_1, c_2)\in\CC^2$. Let $\mathcal{I}$ be the integral $\int_{\mathcal{L}(\gamma)}\xi$, then the variation of $\mathcal{I}$ when $h$ and $k$ vary along $\Gamma$ is given by $\Delta_\Gamma \mathcal{I}=2\pi i\,\textrm{Res} (\xi,+\infty)$.\footnote{By $+\infty$ we mean the point at infinity in the upper leaf of the Riemann surface, \textit{i.e.}, $\mu>0$ for $\lambda$ large enough.} Moreover, this residue can be expressed as
$$
\textrm{Res} (\xi,+\infty)=-\frac{c_1}{\sqrt{a_4}},
$$
where $a_4$ is the leading coefficient of $Q$.
\end{theorem}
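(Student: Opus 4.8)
The plan is to split the statement into two independent parts: the variation formula $\Delta_\Gamma\mathcal{I}=2\pi i\,\textrm{Res}(\xi,+\infty)$, which is a global monodromy statement resting on Picard--Lefschetz theory together with the residue theorem, and the evaluation of the residue, which is a purely local computation at infinity. Before either, I would record the geometric picture underlying $\xi$ and $\mathcal{L}(\gamma)$. Since $\tilde\lambda$ is the root of $C(\lambda)=0$, one has $Q_{h,k}(\tilde\lambda)=A(\tilde\lambda)^2+B(\tilde\lambda)C(\tilde\lambda)=A(\tilde\lambda)^2=\tilde\mu^2$, so the separated variables $(\tilde\lambda,\tilde\mu)$ always lie on the spectral curve $\Sigma_{h,k}\colon\tilde\mu^2=Q_{h,k}(\tilde\lambda)$. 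For a regular value this is a smooth genus-one curve with two points at infinity $\pm\infty$, distinguished by the sign of $\tilde\mu\sim\pm\sqrt{a_4}\,\tilde\lambda^2$ as $\tilde\lambda\to\infty$, and $\xi$ is a differential of the third kind on $\Sigma_{h,k}$ with poles only possibly at $\pm\infty$. Because the two endpoints of the Hamiltonian arc $\gamma$ lie on a single $K$-orbit, and $\mathcal{L}$ collapses every $K$-orbit to a point, the image $\mathcal{L}(\gamma)$ is a closed loop on $\Sigma_{h,k}$, whose class I regard as an element of $H_1(\Sigma_{h,k}\setminus\{+\infty,-\infty\})$.

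For the residue I would pass to the local coordinate $t=1/\tilde\lambda$ near $+\infty$, where on the upper leaf $\tilde\mu=\sqrt{a_4}\,\tilde\lambda^2\bigl(1+O(1/\tilde\lambda)\bigr)$. Substituting gives $\dfrac{\tilde\lambda\,d\tilde\lambda}{\tilde\mu}=-\dfrac{dt}{\sqrt{a_4}\,t}\bigl(1+O(t)\bigr)$, a simple pole of residue $-1/\sqrt{a_4}$, while $\dfrac{d\tilde\lambda}{\tilde\mu}=-\dfrac{dt}{\sqrt{a_4}}+O(t\,dt)$ is holomorphic at $t=0$. Hence $\textrm{Res}(\xi,+\infty)=-c_1/\sqrt{a_4}$, which is the second assertion. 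I would also note for later use that the residue at $-\infty$ carries the opposite sign, so that the two residues cancel in the total-residue relation on the compact curve.

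The core of the argument is the variation formula. Here I would invoke the hypothesis that $Q_{h_0,k_0}$ has two non-real double roots together with the genericity condition~(G): this is exactly the complex focus--focus degeneration, in which the four branch points of $\Sigma_{h,k}$ coalesce in two colliding pairs as $(h,k)\to(h_0,k_0)$. Using the normal form for this family of Riemann surfaces established in Sec.~\ref{sec.tech}, I would transport the loop $\mathcal{L}(\gamma)$ as $(h,k)$ runs once around $\Gamma$. The braiding of the branch points produces a Picard--Lefschetz transformation whose net effect on the class $[\mathcal{L}(\gamma)]$ in the punctured surface is to add a single positively oriented small loop $\ell_{+\infty}$ encircling the puncture $+\infty$; the orientation of $\gamma$ selects $+\infty$ rather than $-\infty$, consistently with the fact that the two residues are opposite and must not cancel. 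Consequently $\Delta_\Gamma\mathcal{I}=\int_{\ell_{+\infty}}\xi=2\pi i\,\textrm{Res}(\xi,+\infty)$ by the residue theorem, which combined with the local computation completes the proof.

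The main obstacle will be the third step: rigorously identifying the monodromy of the \emph{specific} cycle $\mathcal{L}(\gamma)$ and verifying that the extra piece is exactly one turn around $+\infty$, with the correct sign and multiplicity, rather than some combination of the vanishing cycle and loops around both points at infinity. This requires keeping track of the real origin of $\gamma$ through the complexification and leaning on the explicit normal form of Sec.~\ref{sec.tech}, whereas the residue evaluation and the final application of the residue theorem are routine.
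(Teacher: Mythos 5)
Your proposal is correct and follows essentially the same route as the paper: reduction to the normal form of Sec.~\ref{Sec.normalform}, tracking the braiding of the branch points along $\Gamma$ to show that the class of $\mathcal{L}(\gamma)$ picks up one positively oriented loop around $+\infty$, then the residue theorem plus the local computation in $t=1/\tilde{\lambda}$ giving $-c_1/\sqrt{a_4}$. The step you single out as the main obstacle is exactly where the paper inserts an additional explicit assumption --- that the image of $\mathcal{L}(\gamma)$ under $\tilde{\lambda}$ deforms to a simple loop around one conjugate pair of roots, verified case by case in the examples --- and then settles the monodromy by the concrete root-motion computation of Lemma~\ref{lemma.mov} and an explicit deformation of $\tilde{\gamma}_f-\tilde{\gamma}_i$ into a small cycle around $+\infty$, rather than by a general Picard--Lefschetz formula.
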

Theorem~\ref{Teo.main} is related to the HM of the system, because HM can be studied from the rotation number $\Theta$ that can be expressed as an integral of a one-form in the space $(\tilde{\lambda},\tilde{\mu})$.

More specifically, taking $\eta=\frac{\dot{\theta}}{\dot{\tilde{\lambda}}}d\tilde{\lambda}$, we get
\begin{equation*}
\Theta=\int_0^T\dot{\theta}dt=\int_{\mathcal{L}(\gamma)}\eta.
\end{equation*}
We deduce that the study of HM boils down to the study of meromorphic forms in the space $(\tilde{\lambda},\tilde{\mu})$. As a consequence of Theorem~\ref{Teo.main}, we obtain the following result.
\begin{theorem}\label{cor.main}
Consider a completely integrable Hamiltonian system in $\RR^4$, as described above, in a neighborhood of an isolated critical point $(q_0,p_0)$. Assume that  $\textrm{Res} (\eta,+\infty)=\frac{1}{i}$. Under the conditions given in Theorem~\ref{Teo.main}, the Monodromy matrix $\mathbb{M}$, for $\Gamma$, is
\begin{equation*}
\mathbb{M}=
\begin{pmatrix}
1 & 1\\
0 & 1
\end{pmatrix} ,
\end{equation*}
in an appropriate basis.
\end{theorem}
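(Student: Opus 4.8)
The plan is to reduce the computation of $\mathbb{M}$ to the variation of the rotation number $\Theta$, and then to evaluate that variation with Theorem~\ref{Teo.main}. Recall that $\mathbb{M}$ is the automorphism of $H_1(\mathbb{T}^2,\mathbb{Z})$ induced by parallel transport of a homology basis of the regular fiber around $\Gamma$. I would work in the basis $(b_1,b_2)$ in which $b_2$ is the orbit of the $\mathbb{S}^1$-action generated by $K$ (the closed $K$-orbit of period $2\pi$), and $b_1$ is the cycle obtained by closing the Hamiltonian trajectory $\gamma$: starting from $(q,p)$, following the $H$-flow up to the first return time $T$, and then running back along the $K$-orbit through the angular gap $\Theta$ to the initial point. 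By construction $(b_1,b_2)$ is a basis of the regular torus, and $\gamma$ represents $b_1$ modulo $b_2$.

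Since $K$ generates a \emph{global} circle action on the phase space, the orbit $b_2$ is globally well defined and is therefore fixed by the Monodromy, $b_2\mapsto b_2$. The heart of the argument is to show that $b_1\mapsto b_1+b_2$. Geometrically, the first-return map advances the base point along its $K$-orbit by the angle $\Theta$; as $(h,k)$ runs once around $\Gamma$ this closing angle changes by $\Delta_\Gamma\Theta$, and the winding defect $m=\Delta_\Gamma\Theta/(2\pi)$---necessarily an integer, since the transported cycle must again be an integral class in $H_1(\mathbb{T}^2,\mathbb{Z})$---counts the number of extra full $K$-orbits that the closed cycle $b_1$ absorbs under transport. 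Thus $b_1\mapsto b_1+m\,b_2$.

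It remains to compute $\Delta_\Gamma\Theta$. Writing $\Theta=\int_{\mathcal{L}(\gamma)}\eta$ with the meromorphic one-form $\eta=\frac{\dot{\theta}}{\dot{\tilde{\lambda}}}\,d\tilde{\lambda}$ on the Riemann surface $\mu^2=Q_{h,k}(\lambda)$, I would first verify that, exactly as for the forms $\xi$ in Theorem~\ref{Teo.main}, $\eta$ is holomorphic on the finite part of each regular fiber and carries its only relevant singularity at the point at infinity $+\infty$. The Picard--Lefschetz mechanism underlying Theorem~\ref{Teo.main}---the vanishing cycle collapses onto the two non-real double roots of $Q$ at $(h_0,k_0)$, so that the variation of the integral is captured entirely by the residue at infinity---then applies verbatim to $\eta$, giving $\Delta_\Gamma\Theta=2\pi i\,\mathrm{Res}(\eta,+\infty)$. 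With the hypothesis $\mathrm{Res}(\eta,+\infty)=\tfrac{1}{i}$ this yields $\Delta_\Gamma\Theta=2\pi i\cdot\tfrac{1}{i}=2\pi$, in agreement with the expected focus-focus value, and hence $m=1$.

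Combining the two steps, in the basis $(b_1,b_2)$ the Monodromy acts by $b_1\mapsto b_1+b_2$ and $b_2\mapsto b_2$, which is precisely $\mathbb{M}=\begin{pmatrix}1&1\\0&1\end{pmatrix}$. The step I expect to be the main obstacle is the second paragraph: rigorously identifying $\Delta_\Gamma\Theta$ with an integer number of $K$-orbits added to $b_1$, and confirming that the simultaneous variation of the first return time $T$ and of the base point contributes no further term to the homology class. This is where one must keep track of the real structure while passing through the complexified map $\mathcal{L}$, which is exactly the difficulty emphasized in the introduction, namely keeping track of the real nature of the problem inside the complex picture.
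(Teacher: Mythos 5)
Your proposal follows essentially the same route as the paper: express $\Theta$ as $\int_{\mathcal{L}(\gamma)}\eta$, apply Theorem~\ref{Teo.main} to get $\Delta_\Gamma\Theta=2\pi i\cdot\frac{1}{i}=2\pi$, and then use the homology basis $\{\gamma_K,\gamma_H\}$ in which the global $\mathbb{S}^1$-action fixes the $K$-orbit while the closed Hamiltonian cycle picks up one copy of it. The argument is correct (up to the harmless reordering of the basis, which the statement's ``in an appropriate basis'' absorbs), and the difficulty you flag at the end is real but is exactly what the hypotheses of Theorem~\ref{Teo.main} are designed to handle.
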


We state the main results for two degree of freedom systems but in Sec.~\ref{sec.conclu} we discuss possible generalizations of the results to higher dimensional systems.
In Sec.~\ref{subsec.Mero}, we show that the hypothesis of expressing $\xi$ as $c_1\frac{\tilde{\lambda}d\tilde{\lambda}}{\tilde{\mu}}+c_2\frac{d\tilde{\lambda}}{\tilde{\mu}}$ is natural. In Sec.~\ref{sec.exam}, we apply these results to two physical systems and we stress that the assumptions about the spectral Lax pair given by Eq.~\eqref{eq.matL} are inspired by the study of these systems. We conjecture in Sec.~\ref{secquasi} that any Hamiltonian system with a focus-focus singularity can be described, locally, by a Lax pair of this form.

\section{Technicalities}\label{sec.tech}
\subsection{Meromorphic forms on Riemann surfaces}\label{subsec.Mero}
In this section, we show that meromorphic forms on elliptic curves having at most simple poles at infinity are generated by $\{\frac{\lambda d\lambda}{\mu}$, $\frac{d\lambda}{\mu}\}$. This result is at the origin of the condition $\xi=c_1\frac{\tilde{\lambda}d\tilde{\lambda}}{\tilde{\mu}}+c_2\frac{d\tilde{\lambda}}{\tilde{\mu}}$ in the statement of Theorem~\ref{Teo.main}.

\begin{prop}\label{prop.mero}
Consider an elliptic regular curve given by the equation
$F(\lambda,\mu)=0$, of the form $F(\lambda,\mu)=\mu^2-Q(\lambda)$, with $Q$ a square-free polynomial of degree $four$. Then the space of meromorphic forms having at most simple poles at infinity is two-dimensional generated by $\{\frac{\lambda d\lambda}{\mu}$, $\frac{d\lambda}{\mu}\}$.
\end{prop}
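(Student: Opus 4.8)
The plan is to realize the affine curve $\mu^2 = Q(\lambda)$ as the affine part of a compact Riemann surface $X$ and to analyze its meromorphic one-forms. Since $Q$ has degree four with nonzero leading coefficient $a_4$, passing to the local parameter $t = 1/\lambda$ near infinity shows that the two branches $\mu \sim \pm a_4^{1/2}\lambda^2$ give \emph{two} distinct, unramified points $\infty_+$ and $\infty_-$. Because $\deg Q = 2g+2$ with $g=1$, the surface $X$ is an elliptic curve of genus one; in particular its space of holomorphic one-forms is one-dimensional, and the hypothesis that $Q$ is square-free guarantees that $X$ is smooth (the finite branch points, where $\mu=0$, are simple since $Q'$ does not vanish there).

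First I would pin down the dimension. With $D = \infty_+ + \infty_-$, the object of interest is the space $\Omega(D)$ of meromorphic one-forms $\omega$ satisfying $(\omega) \ge -D$. Multiplying by a fixed form with canonical divisor $K$ (of degree $2g-2=0$) identifies $\Omega(D)$ with $L(K+D)$, so Riemann--Roch gives $\dim \Omega(D) = \ell(K+D) = \deg D + \deg K - g + 1 + \ell(-D) = 2 + 0 - 1 + 1 = 2$, where $\ell(-D)=0$ because $\deg(-D) = -2 < 0$. Hence the target space is two-dimensional, and it suffices to exhibit two linearly independent elements.

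Next I would check that the two proposed generators lie in $\Omega(D)$. On the affine part the only candidate poles of $\frac{d\lambda}{\mu}$ sit at the branch points $\mu=0$; there the identity $2\mu\,d\mu = Q'(\lambda)\,d\lambda$ gives $\frac{d\lambda}{\mu} = \frac{2\,d\mu}{Q'(\lambda)}$, which is holomorphic precisely because $Q$ is square-free, so $Q'\neq0$. In the parameter $t=1/\lambda$ one has $d\lambda = -t^{-2}\,dt$ and $\mu = \pm a_4^{1/2}t^{-2}(1+O(t))$, whence $\frac{d\lambda}{\mu}$ extends holomorphically and without zeros across $\infty_\pm$ (it is the genus-one holomorphic differential), while $\frac{\lambda\,d\lambda}{\mu} \sim \mp a_4^{-1/2}\,t^{-1}\,dt$ has exactly a simple pole at each of $\infty_\pm$. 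Both forms therefore belong to $\Omega(D)$, they are linearly independent since $1$ and $\lambda$ are independent functions of $\lambda$, and by the dimension count they span $\Omega(D)$, which is the assertion of the Proposition.

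The step I expect to require the most care is the bookkeeping at infinity: one must confirm that there are exactly two points there (this fails for odd $\deg Q$, where infinity is a single ramification point) and that the pole of $\frac{\lambda\,d\lambda}{\mu}$ is of order exactly one rather than higher. As a self-contained cross-check that also bypasses Riemann--Roch, I would split an arbitrary meromorphic form by the hyperelliptic involution $\mu \mapsto -\mu$. The invariant part is $a(\lambda)\,d\lambda$; holomorphy on the affine part forces $a$ to be a polynomial, and then the pole order at $\infty_\pm$ is $\deg a + 2 \ge 2$, so only $a\equiv0$ is admissible. The anti-invariant part is $\frac{g(\lambda)}{\mu}\,d\lambda$ with $g$ necessarily a polynomial, whose pole order at $\infty_\pm$ equals $\deg g$; the at-most-simple-pole condition then forces $\deg g \le 1$, recovering exactly $\spantxt\{\frac{\lambda\,d\lambda}{\mu},\,\frac{d\lambda}{\mu}\}$. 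This also exhibits the expected structure of $\Omega(D)$ as one holomorphic differential together with one third-kind differential whose opposite residues at $\infty_+$ and $\infty_-$ sum to zero, consistent with the residue theorem on $X$.
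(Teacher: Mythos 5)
Your main argument is essentially the paper's: both proofs identify $\Omega(D)=L^{(1)}(D)$ for $D=\infty_++\infty_-$ with a Riemann--Roch space, use that the canonical divisor of the genus-one curve has degree zero (the paper notes explicitly that $\frac{d\lambda}{\mu}$ is holomorphic and nowhere zero, so $K_d=0$), conclude $\dim L^{(1)}(D)=\deg D=2$, and then exhibit the two stated forms as generators. Your version supplies some local bookkeeping that the paper leaves implicit --- the expansion in $t=1/\lambda$ showing two unramified points at infinity, the identity $\frac{d\lambda}{\mu}=\frac{2\,d\mu}{Q'(\lambda)}$ at the finite branch points, and the explicit residue $\mp a_4^{-1/2}$ of $\frac{\lambda\,d\lambda}{\mu}$ at $\infty_\pm$ (which matches the residue formula in Theorem~\ref{Teo.main}). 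Your closing cross-check via the hyperelliptic involution is a genuinely different and more elementary route the paper does not take: decomposing an arbitrary form as $a(\lambda)\,d\lambda+\frac{g(\lambda)}{\mu}\,d\lambda$ and bounding pole orders at infinity kills the invariant part and forces $\deg g\le 1$, giving the result without invoking Riemann--Roch at all. That argument buys self-containedness and makes transparent why the generators are exactly $\frac{\lambda\,d\lambda}{\mu}$ and $\frac{d\lambda}{\mu}$, at the cost of being special to hyperelliptic curves, whereas the paper's divisor-theoretic count generalizes directly to other Riemann surfaces and other divisors $D$. Both arguments are correct.
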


\begin{proof}
The proof follows from the general study of the space of meromorphic functions and forms on Riemann surfaces in the spirit of the Riemann-Roch theorem. All the references can be found in the classical book~\cite{miranda1995algebraic}.

Let $X$ be a compact Riemann surface given by the compactification
\begin{equation}\label{X}
X=F^{-1}(0)\cup\{-\infty,+\infty\} ,
\end{equation}
of the complexification of $F^{-1}$, obtained by adding two points at infinity (one for each leaf). We consider $\mathcal{M}$ and $\mathcal{M}^{(1)}$ respectively the space of meromorphic functions and forms on $X$. Given $f\in\mathcal{M}$, we associate its divisor $\textrm{div}(f)=\sum_{p\in X} \textrm{ord}_{p}(f)p$. Taking $\omega\in\mathcal{M}^{(1)}$, we define similarly $\textrm{div}(\omega)$, by using local uniformizations of $\omega$ at any point $p$ of $X$.
The supports of the sum are finite. Let $D$ be a divisor. We define the vector spaces
\begin{equation*}
L(D)=\{f\in\mathcal{M}(X): \textrm{div}(f)\geq-D\} ,
\end{equation*}
and
\begin{equation*}
L^{(1)}(D)=\{\omega\in\mathcal{M}^{(1)}(X): \textrm{div}(\omega)\geq-D\} .
\end{equation*}

We are interested in one-forms having at most simple poles at infinity. Note that a form cannot have only one pole with nonzero residue as the sum of residues on a compact surface is zero. Hence, we take the divisor $D=1\cdot (+\infty)+1\cdot(-\infty)$ and consider
$L^{(1)}(D)$. Using \cite{miranda1995algebraic}~(Chapter V, Lemma 3.11), we have for any Riemann surface
\begin{equation*}
\dim L^{(1)}(D)=\dim L(D+K_d),
\end{equation*}
where $K_d$ is a canonical divisor (i.e. divisor of any meromorphic form on the Riemann surface $X$).

Now we restrict to $X$ given by Eq.~\eqref{X}. Note that the form $\omega=\frac{d\lambda}{\mu}$ is holomorphic on $X$ and does not have zero. Hence, the canonical divisor $K_d$ is the zero divisor, so for the elliptic curve we get $\dim L^{(1)}(D)=\dim L(D)$. The Riemann surface $X$ is a torus and proposition 3.14 of chapter V~\cite{miranda1995algebraic} applies and gives $\dim(L(D))=\deg(D)=2$, thus
\begin{equation*}
\dim L^{(1)}(D)=2 .
\end{equation*}
On the other hand, by the same analysis but for the 0 divisor, we obtain $\dim L^{(1)}(0)=1$ (this space is generated by the holomorphic form $\frac{d\lambda}{\mu}$).
One verifies easily that $\textrm{Res}(\frac{\lambda d\lambda}{\mu},\pm\infty)\neq 0$ (and this residue is equal to $\pm 1$ when $Q$ is unitary).
Hence, $\{\frac{\lambda d\lambda}{\mu}$, $\frac{d\lambda}{\mu}\}$ generates the space $L^{(1)}(D)$, and the differential one-forms in this space with nontrivial residues at $\pm\infty$ have a non-zero component in $\frac{\lambda d\lambda}{\mu}$.
\end{proof}

\subsection{Normal form}\label{Sec.normalform}
We formulate below the genericity condition~\eqref{eq.G} used in Theorem~\ref{Teo.main} and derive a normal form of a polynomial of degree four which verifies this condition.

\begin{lemma}\label{lem.difeo}
Let $Q(\lambda,h,k)$ be a real polynomial of degree four which for a value $(h_0,k_0)$ has two different double non-real roots. Assume that the differential of the function $F$ (defined by Eq.~\eqref{eq.F}) is invertible at $(h_0,k_0)$. Moreover, by convention we assume that\footnote{It is possible to work with the negative determinant but in this case, the change of orientation has to be taken into account.}
\begin{equation}
  \tag{G}
  \det\left[\frac{\partial F}{\partial (h,k)}\right]_{(h_0,k_0)}> 0 .
  \label{eq.G}
\end{equation}
Then there exists a local orientation preserving diffeomorphism
\begin{equation*}
(\lambda,h,k)\mapsto(\hat{\lambda}(\lambda,h,k),\hat{h}(h,k),\hat{k}(h,k))
\end{equation*}
which transforms the polynomial $Q$ to the normal form polynomial
\begin{equation*}
\hat{\lambda}^4+(2+\hat{h})\hat{\lambda}^2+\hat{k}\hat{\lambda}+1
\end{equation*}
multiplied by a unity.
\end{lemma}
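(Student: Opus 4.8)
The plan is to reach the normal form by a chain of explicit, smoothly $(h,k)$-dependent changes of the variable $\lambda$, and then to absorb the leftover dependence on the coefficients into a reparametrization of $(h,k)$ whose regularity is precisely the hypothesis on $F$. First I would factor out the leading coefficient $a_4(h,k)$, which is nonzero in a neighborhood of $(h_0,k_0)$ since $Q$ has degree four there, and work with the monic polynomial $\lambda^4+a_3\lambda^3+a_2\lambda^2+a_1\lambda+a_0$; applying the Tschirnhaus shift $\lambda\mapsto\lambda-a_3(h,k)/4$, which depends smoothly on $(h,k)$, removes the cubic term and yields a depressed polynomial $\lambda^4+p\lambda^2+q\lambda+r$ with $p,q,r$ smooth in $(h,k)$. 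The key base-point computation is that, since $Q$ has two distinct non-real double roots at $(h_0,k_0)$ and the sum of the four roots vanishes after depression, these roots are forced to be purely imaginary, say $\pm i\sqrt b$ with $b>0$; hence at $(h_0,k_0)$ the depressed polynomial equals $(\lambda^2+b)^2$, so that $q=0$, $p=2b$ and $r=b^2>0$.

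Because $r(h_0,k_0)>0$, the fourth root $r^{1/4}$ is a smooth positive function on a neighborhood, and I would rescale by $\lambda=r^{1/4}\hat\lambda$. Factoring out $r$, the polynomial becomes $r\,[\hat\lambda^4+(p/\sqrt r)\hat\lambda^2+(q/r^{3/4})\hat\lambda+1]$, whose constant term is normalized to $1$. Setting $\hat h=p/\sqrt r-2$ and $\hat k=q/r^{3/4}$, the base-point values give $\hat h=\hat k=0$, so the bracket is exactly $\hat\lambda^4+(2+\hat h)\hat\lambda^2+\hat k\hat\lambda+1$ and the prefactor $a_4 r$ is the announced unit. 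The pair $(\hat h,\hat k)$ is precisely (a constant shift of) the map $F$; since its differential at $(h_0,k_0)$ is invertible, the inverse function theorem makes $(h,k)\mapsto(\hat h,\hat k)$ a local diffeomorphism.

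It then remains to check that the combined map $(\lambda,h,k)\mapsto(\hat\lambda,\hat h,\hat k)$, with $\hat\lambda=(\lambda-a_3/4)/r^{1/4}$, is an orientation-preserving local diffeomorphism. Since $\hat h,\hat k$ depend only on $(h,k)$, the Jacobian is block triangular and its determinant factors as $(\partial\hat\lambda/\partial\lambda)\cdot\det[\partial(\hat h,\hat k)/\partial(h,k)]=r^{-1/4}\,\det[\partial F/\partial(h,k)]$. The first factor is strictly positive, so the sign of the full Jacobian is governed entirely by $\det[\partial F/\partial(h,k)]$, which is positive by the convention \eqref{eq.G}; this is exactly why orientation is preserved. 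I expect the only genuinely delicate points to be bookkeeping ones: confirming that the depression forces purely imaginary roots so that the normalized coefficients land on $+1$ and $+2$ rather than on a different sign, and selecting the smooth, sign-definite branch of $r^{1/4}$ near the base point. Everything else reduces to the inverse function theorem together with the block-triangular factorization of the Jacobian.
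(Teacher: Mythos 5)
Your proposal is correct and follows essentially the same route as the paper: a chain of smoothly $(h,k)$-dependent affine changes in $\lambda$ (normalize the leading coefficient, depress the quartic, rescale to fix the constant term at $1$), followed by the inverse function theorem applied to the map $F=(a-2,b)$ and the observation that the Jacobian of the combined change of variables is block triangular with positive determinant under~\eqref{eq.G}. Your version is somewhat more explicit than the paper's (giving the formulas $\hat{h}=p/\sqrt{r}-2$, $\hat{k}=q/r^{3/4}$ and deriving, rather than arranging, that the double roots land at $\pm i$), but the underlying argument is the same.
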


\begin{proof}
First note that, since the polynomial $Q$ has real coefficients then its roots are complex conjugate. Let $\lambda_0$ and $\bar{\lambda}_0$ be the double roots for $(h,k)=(h_0,k_0)$.

After a first translation in the parameters $(h,k)$, we set $(h_0,k_0)$ to $(0,0)$. By a translation and a homothety in the variable $\lambda$, the roots $\lambda_0$, $\bar{\lambda}_0$ are given by $i$ and $-i$. We eliminate the cubic term by a translation on $\lambda$ (which depends on $h$ and $k$). In order to have the constant coefficient equal to 1, we consider a homothety on $\lambda$ and factorize, leading to the polynomial
\begin{equation*}
u(h,k)(\hat{\lambda}^4+a(h,k)\hat{\lambda}^2+b(h,k)\hat{\lambda}+1) ,
\end{equation*}
where $u$ is a unity in a neighborhood of $(h_0,k_0)$ and $a(h_0,k_0)=2$, $b(h_0,k_0)=0$.

Finally, from the transformation $(h,k)\xmapsto{F}(\hat{h},\hat{k})$ given by
\begin{equation}\label{eq.F}
F(h,k)=(a(h,k)-2,b(h,k)) ,
\end{equation}
which is a local diffeomorphism by the inverse function theorem, we arrive at the normal form polynomial
\begin{equation}\label{eq.NF}
\hat{Q}(\hat{\lambda},\hat{h},\hat{k})=\hat{\lambda}^4+(2+\hat{h})\hat{\lambda}^2+\hat{k}\hat{\lambda}+1 ,
\end{equation}
multiplied by the unity $u(F^{-1}(\hat{h},\hat{k}))$.
\end{proof}

We point out that the change of variables in $\lambda$ is global while the change of variables in $(h,k)$ is local. Thus, we can consider the transformation $(\lambda,\mu)\xmapsto{G}(\hat{\lambda},\hat{\mu})$, where $\hat{\mu}=\frac{\mu}{\sqrt{u}}$, which is a global diffeomorphism and we obtain the normal form of the spectral curve
\begin{equation*}
\hat{\mu}^2=\hat{\lambda}^4+(2+\hat{h})\hat{\lambda}^2+\hat{k}\hat{\lambda}+1 .
\end{equation*}

\subsection{Monodromy of roots of the normal form polynomial}
In this subsection, we describe the motion of the roots of the normal form polynomial obtained in Sec.~\ref{Sec.normalform}.

The Monodromy of the surfaces of the form $\mu^2=P_{h,k}(\lambda)$ is given by the movement of the roots of the polynomial $P_{h,k}(\lambda)$, when $(h,k)$ varies along a loop around a singularity. In this sense, the Monodromy of the roots codifies all the information about the Monodromy of the surfaces and the loops on them.

\begin{lemma}\label{lemma.mov}
Let $\hat{Q}(\hat{\lambda},\hat{h},\hat{k})$ be the polynomial defined by Eq.~\eqref{eq.NF}. For values of $(\hat{h},\hat{k})$ close enough to $(0,0)$, the polynomial $\hat{Q}(\hat{\lambda},\hat{h},\hat{k})$ has four roots of the form $i\pm \beta$, $-i\pm\bar{\beta}$, where $\beta\in\mathbb{C}$ depends on the value $(\hat{h},\hat{k})$. Moreover, if $(\hat{h},\hat{k})$ moves along a small loop around $(0,0)$, then the two roots close to $i$ turn around $i$ until they exchange their positions and the same statement holds for $-i$.
\end{lemma}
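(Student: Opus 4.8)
The plan is to track the four roots of
$$\hat{Q}(\hat\lambda,\hat h,\hat k)=\hat\lambda^4+(2+\hat h)\hat\lambda^2+\hat k\hat\lambda+1$$
as small perturbations of the double roots $\pm i$. At $(\hat h,\hat k)=(0,0)$ the polynomial is $\hat\lambda^4+2\hat\lambda^2+1=(\hat\lambda^2+1)^2=(\hat\lambda-i)^2(\hat\lambda+i)^2$, so $+i$ and $-i$ are each double roots, consistent with the hypothesis. For $(\hat h,\hat k)$ small the four roots split; I would write them as $i+\beta_1$, $i+\beta_2$ near $+i$ and $-i+\gamma_1$, $-i+\gamma_2$ near $-i$, with all increments small. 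Because $\hat Q$ has real coefficients, its roots come in complex-conjugate pairs, so the pair near $-i$ is the conjugate of the pair near $+i$; this immediately reduces the problem to analyzing the two roots near $+i$ and gives the claimed form $i\pm\beta$, $-i\pm\bar\beta$ once I show the two increments near $+i$ are negatives of one another to leading order.

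\textbf{Local expansion.} The key computation is to set $\hat\lambda=i+z$ with $z$ small and expand. First I would carry out the \emph{unperturbed} expansion: writing $\hat\lambda^2+1=(\hat\lambda-i)(\hat\lambda+i)=z(2i+z)$, the value $\hat\lambda^4+2\hat\lambda^2+1=(\hat\lambda^2+1)^2=z^2(2i+z)^2$. Adding the perturbation $\hat h\,\hat\lambda^2+\hat k\,\hat\lambda$ and evaluating the slowly-varying factors at the base point $\hat\lambda=i$ (so $\hat\lambda^2\approx -1$, $\hat\lambda\approx i$), the defining equation $\hat Q=0$ becomes, to leading order,
$$z^2(2i)^2-\hat h+i\,\hat k\approx 0,$$
that is $-4z^2=\hat h-i\hat k$, so the two roots near $+i$ are $z=\pm\beta$ with
$$\beta^2=-\tfrac14(\hat h-i\hat k).$$
This exhibits the two roots near $+i$ as $i\pm\beta$ with $\beta=O(|(\hat h,\hat k)|^{1/2})$, and by conjugation the two near $-i$ as $-i\pm\bar\beta$, establishing the first assertion. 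I would remark that a clean way to make this rigorous is the Weierstrass preparation / Puiseux viewpoint: $\hat Q$ factors near $\hat\lambda=i$ as a quadratic $(\hat\lambda-i)^2 - \beta^2 + \text{(higher order)}$ in $(\hat\lambda-i)$ with coefficients holomorphic in the parameters, whose two roots are $i\pm\beta$.

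\textbf{The monodromy.} The second, more delicate, assertion concerns the motion of the roots as $(\hat h,\hat k)$ traverses a small loop around $(0,0)$. The point is that $\beta$ depends on $(\hat h,\hat k)$ through a \emph{square root}, via $\beta^2=-\tfrac14(\hat h-i\hat k)$. Parametrizing the loop so that the combination $w:=\hat h-i\hat k$ encircles the origin once, $w=\rho e^{i\phi}$ with $\phi:0\to 2\pi$, I get $\beta=\tfrac12\sqrt{-w}=\tfrac12\sqrt{\rho}\,e^{i(\phi+\pi)/2}$, whose argument increases by $\pi$ over the loop. Hence $\beta\mapsto -\beta$: the two roots $i+\beta$ and $i-\beta$ are interchanged, having each swept a half-turn around the common center $i$. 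By conjugation the same happens for the pair near $-i$. \textbf{The main obstacle} is justifying that the higher-order corrections dropped in the leading-order balance do not destroy this picture, i.e. that $\beta^2$ is, up to an invertible holomorphic factor and for $(\hat h,\hat k)$ in a punctured neighborhood of $(0,0)$, genuinely equal to $-\tfrac14(\hat h-i\hat k)$ and therefore still a coordinate whose vanishing is simple; once $w\mapsto\beta^2$ is a local biholomorphism off the origin, the winding-number argument is rigorous and the half-turn exchange follows. I would handle this by noting that $\beta^2$ extends holomorphically in $(\hat h,\hat k)$ with nonvanishing linear part $-\tfrac14(\hat h-i\hat k)$ (which is exactly the nondegeneracy secured by condition~\eqref{eq.G} and Lemma~\ref{lem.difeo}), so for a sufficiently small loop the two roots near $+i$ remain within a disc that excludes $-i$, stay distinct except when the loop's starting point is reached, and complete exactly one exchange.
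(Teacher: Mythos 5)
Your argument is correct and follows essentially the same route as the paper: expand at $\hat{\lambda}=\pm i$, reduce to the leading balance $-4z^2=\hat{h}\mp i\hat{k}$, and track the half-turn of the square root as $(\hat{h},\hat{k})$ traverses the loop (the paper justifies dropping the higher-order terms by a Newton-diagram argument, which is the same nondegeneracy of the linear part $-\tfrac14(\hat{h}-i\hat{k})$ that you invoke, though this comes from the normal form itself rather than from condition~\eqref{eq.G}). The only cosmetic difference is that you phrase the exchange via the winding of $w=\hat{h}-i\hat{k}$ while the paper writes $z=\rho e^{i\theta}$ and solves for $\theta(\phi)$ explicitly.
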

\begin{proof}
Consider a small circle $\mathcal{C}$ around the origin in the space $(\hat{h},\hat{k})$ with positive orientation, \textit{i.e.},
\begin{align*}
\hat{h}&= r\cos\phi \\
\hat{k}&= r\sin\phi ,
\end{align*}
with $r\ll 1$ and $\phi\in [0,2\pi]$. The polynomial
\begin{equation*}
\hat{Q}_{\hat{h},\hat{k}}(\hat{\lambda})=\hat{\lambda}^4+(2+\hat{h})\hat{\lambda}^2+\hat{k}\hat{\lambda}+1
\end{equation*}
has two double roots, $i$ and $-i$, for $(\hat{h},\hat{k})=(0,0)$. We set $\hat{\lambda}=\epsilon i+z$, where $\epsilon=\pm 1$. We substitute $\hat{\lambda}$ in the equation $\hat{Q}_{\hat{h},\hat{k}}(\hat{\lambda})=0$ and, neglecting higher order terms (using a Newton diagram), we obtain
\begin{equation}\label{eq.square}
-4z^2=\hat{h}-\epsilon i\hat{k} .
\end{equation}
From Eq.~\eqref{eq.square}, we deduce that the two roots near $i$ are $i\pm\beta$ where $\beta=\sqrt{\frac{i\hat{k}-\hat{h}}{4}}$, with an appropriate logarithmic branch. Note that the square root argument is different from 0 because $\hat{h}$ and $\hat{k}$ are both real. The roots near $-i$ are then of the form $-i\pm\bar{\beta}$ because the polynomial has real coefficients. This can also be deduced from Eq.~\eqref{eq.square}.

Now, we analyze the motion of these roots when $(\hat{h},\hat{k})$ varies along $\mathcal{C}$. Setting $z=\rho e^{i\theta}$, with $\rho\ll 1$, and substituting the parametrization of $\mathcal{C}$ we obtain
\begin{equation*}
-4\rho^2e^{2i\theta}=re^{-i\epsilon \phi} ,
\end{equation*}
which leads to
\begin{align*}
\rho&=\frac{r^{1/2}}{2} \\
\theta&=-\frac{\epsilon\phi}{2}-\frac{\pi}{2}+k\pi ,
\end{align*}
with $k\in\mathbb{Z}$. We conclude that near each double root, we obtain two simple roots which exchange their positions along the loop $\mathcal{C}$ in the space $(\hat{h},\hat{k})$.
\end{proof}

\section{Proofs of the main results}\label{sec.proof}
\begin{proof}[Proof of Theorem~\ref{Teo.main}]
The system is represented by a spectral Lax pair of the form given by Eq.~\eqref{eq.matL} and the corresponding spectral curve is
\begin{equation*}
\mu^2=A^2(\lambda)+B(\lambda)C(\lambda)
\end{equation*}
where $Q_{h,k}(\lambda)=A^2(\lambda)+B(\lambda)C(\lambda)$ is a polynomial in $\lambda$. For fixed values of the constants of motion, the spectral curve, $\mu^2=Q_{h,k}(\lambda)$, defines a Riemann surface. A link between the initial phase space and the Riemann surface can be established from the set $(\tilde{\lambda},\tilde{\mu})$ where $\tilde{\lambda}$ is a solution of the implicit equation $C(\lambda)=0$ and $\tilde{\mu}^2=A(\tilde{\lambda})^2$. \footnote{The set of coordinates $(\tilde{\lambda},\tilde{\mu})$ is related with a vector bundle that can be defined through the eigenvectors associated to the eigenvalues given by $\mu$~\cite{babelon}. Note also that we can replace $C$ by $B$ everywhere.}

First, note that $\tilde{\lambda},\tilde{\mu}\colon\mathbb{R}^4\to\mathbb{C}$ are complex valued functions that depend on the original variables of the system. The multivalued mapping $\mathcal{L}(q,p)=(\tilde{\lambda}(q,p),\tilde{\mu}(q,p),\mathcal{EM}(q,p))$ is such that $\mathcal{L}\colon\mathbb{R}^4\to\mathbb{C}^2\times\RR^2$, and we obtain the commutative diagram of Fig.~\ref{figCommDiag}.

From the definition of $\tilde{\lambda}$ and $\tilde{\mu}$, it is straightforward to deduce that
\begin{equation}\label{eq.Rie}
\tilde{\mu}^2=Q_{h,k}(\tilde{\lambda}).
\end{equation}
Equation~\eqref{eq.Rie} tells us that the fibers of $\Phi$ are contained in Riemann surfaces defined by the same equation as the spectral curve.

Now, we analyze these Riemann surfaces for a fixed value of $(h,k)\neq (h_0,k_0)$ close enough to $(h_0,k_0)$. Using Lemma~\ref{lem.difeo}, we know that the Riemann surface given by $\tilde{\mu}^2=Q_{h,k}(\tilde{\lambda})$ is diffeomorphic to the one given by the equation $\hat{\mu}^2=\hat{\lambda}^4+(2+\hat{h})\hat{\lambda}^2+\hat{k}\hat{\lambda}+1$,  and from Lemma~\ref{lemma.mov} we obtain that this polynomial has four different non-real roots, which are complex conjugate. The Riemann surface is therefore a torus with two points at infinity (see Fig.~\ref{fig.RS}).

\begin{figure}[h!]
  \centering
    \includegraphics[scale=.5]{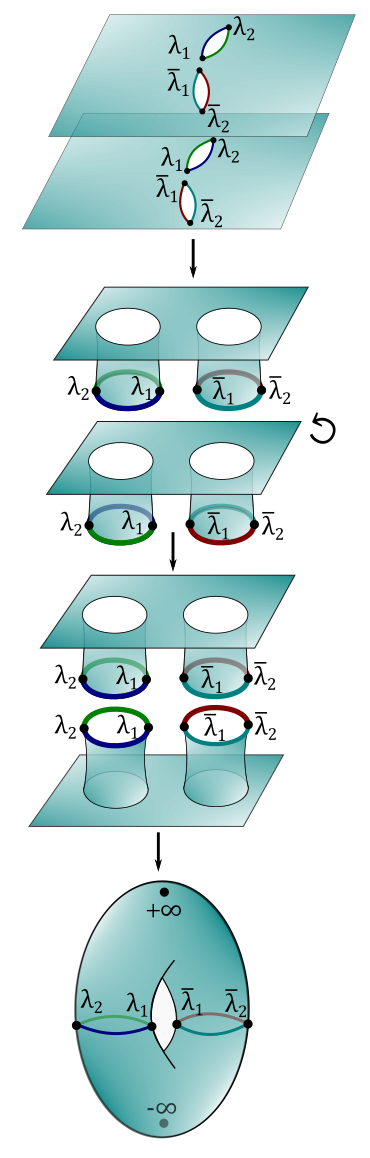}
    \caption{The upper panel represents two complex planes where cuts between the roots of the polynomial have been added. The segments with the same color in both planes are identified. Schematically, the Riemann surface can be constructed by performing the different deformations shown in this Figure.}
    \label{fig.RS}
\end{figure}
We assume that the image of the cycle $\mathcal{L}(\gamma)$ under the map $\tilde{\lambda}$ can be deformed (in the complex plane) to a simple loop that goes around two conjugate roots of $Q_{h,k}(\tilde{\lambda})$. This hypothesis can be verified directly for the two examples in Sec.~\ref{sec.exam}. In other words, $\mathcal{L}(\gamma)$ is homotopic on the Riemann surface $\tilde{\mu}^2=Q_{h,k}(\tilde{\lambda})$ to a cycle, $\tilde{\gamma}$, of the form described in Fig.~\ref{fig.cycle}.\footnote{Note that it is possible to take the opposite orientation of the cycle giving rise to a change of sign of the integral.}
\begin{figure}[h!]
  \centering
    \includegraphics[scale=.48]{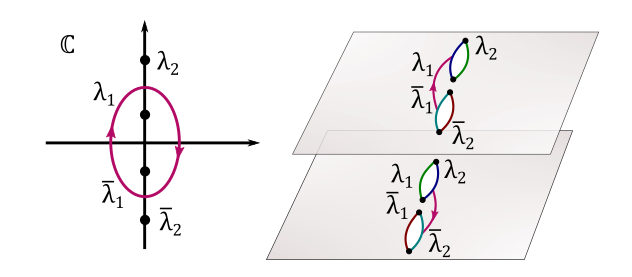}
    \caption{Configuration for the normal form with $\hat{k}=0$, $\hat{h}>0$. In the complex plane (left-hand side), the cycle is a simple loop which goes around the two conjugate roots $\lambda_1$ and $\bar{\lambda}_1$. On the right-hand side, we have the representation of this loop on the Riemann surface.}
    \label{fig.cycle}
\end{figure}

We consider the one-form $\xi$ on this Riemann surface and, since the value of the integral only depends on the homology class, we obtain
\begin{equation*}
\mathcal{I}=\int_{\mathcal{L}(\gamma)}\xi=\int_{\tilde{\gamma}}\xi ,
\end{equation*}
thus,
\begin{equation*}
\Delta_\Gamma\mathcal{I}=\int_{\tilde{\gamma}_f}\xi-\int_{\tilde{\gamma}_i}\xi=\int_{\tilde{\gamma}_f-\tilde{\gamma}_i}\xi ,
\end{equation*}
where $\tilde{\gamma}_i$ and $\tilde{\gamma}_f$ are the initial and final cycles obtained by turning around the point $(h_0,k_0)$ along $\Gamma$.

The cycle $\tilde{\gamma}_f$ can be found by following the transformation of $\tilde{\gamma}_i$ when the constants of motion vary along $\Gamma$ and this is completely determined by the movement of the roots of the polynomial $Q_{h,k}(\tilde{\lambda})$. It is then enough to analyze this movement and the result can be established from Lemma~\ref{lemma.mov}, since we know (from Lemma~\ref{lem.difeo}) that $F$ is an orientation preserving local diffeomorphism in the variables $(H,K)$ and the diffeomorphism in $(\tilde{\lambda},\tilde{\mu})$ maps the roots of one polynomial to the roots of the new one. Thus, using Lemma~\ref{lemma.mov}, we know the movement of the roots which is represented in Fig.~\ref{fig.movRoots}.
\begin{figure}[h!]
  \centering
    \includegraphics[scale=.5]{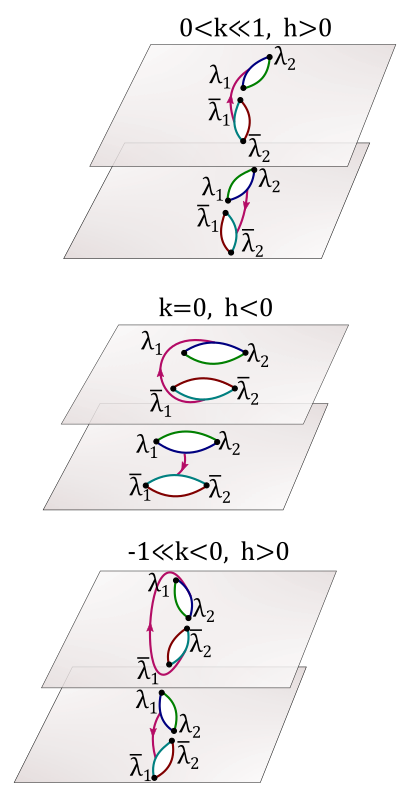}
    \caption{For $\hat{k}=0$ and $\hat{h}>0$ the roots are purely imaginary hence, they are aligned along the imaginary axis. Then, they start rotating counter clock-wise until they exchange their positions (pairwise).}
    \label{fig.movRoots}
\end{figure}
The initial and final cycles are represented on the Riemann surface in Fig.~\ref{fig.gammaIF}.
\begin{figure}[h!]
  \centering
    \includegraphics[scale=.25]{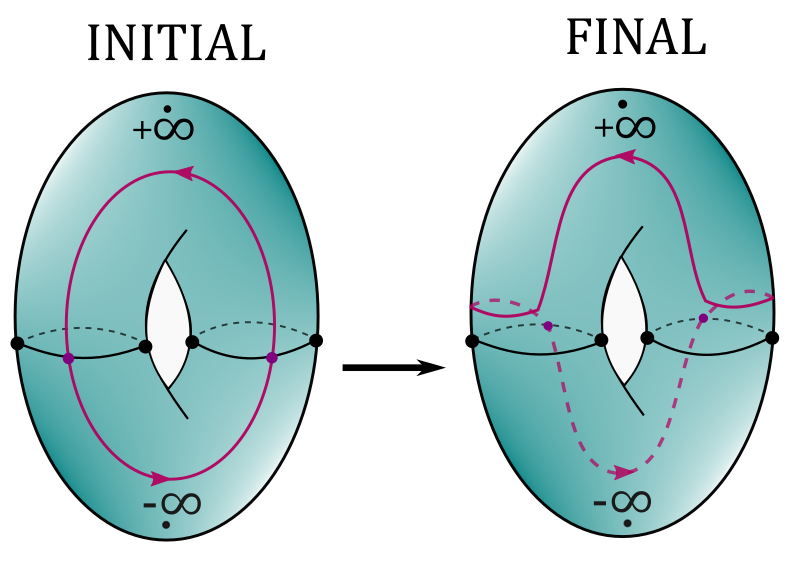}
    \caption{Analyzing the movement of the roots (Figure~\ref{fig.movRoots}), we obtain the initial and the final cycles on the Riemann surface.}
    \label{fig.gammaIF}
\end{figure}

The cycles $\tilde{\gamma}_i$ and $\tilde{\gamma}_f$ are respectively diffeomorphic to the left-hand side and right-hand side cycles in Fig.~\ref{fig.gammaIF}. Hence, the chain $\tilde{\gamma}_f-\tilde{\gamma}_f$ is homologous to the chain formed by the two cycles shown in the first torus of Fig.~\ref{fig.defoCycle}. We can deform these two cycles as shown in Fig.~\ref{fig.defoCycle} to obtain a positively oriented cycle near $+\infty$.
\begin{figure}[h!]
  \centering
    \includegraphics[scale=.49]{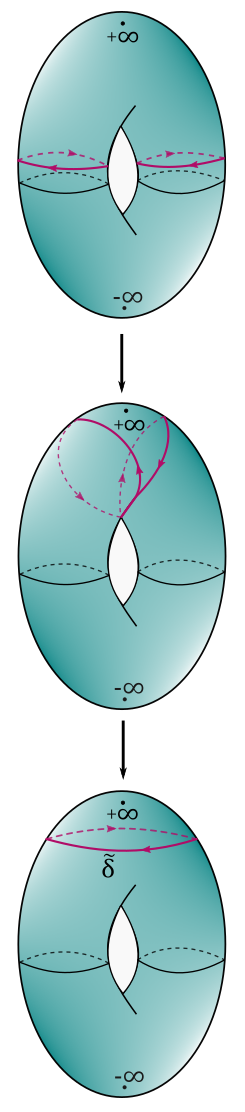}
    \caption{Deformation of the two cycles (that are homologous to $\tilde{\gamma}_f-\tilde{\gamma}_f$) into the positively oriented cycle $\tilde{\delta}$ near $+\infty$.}
    \label{fig.defoCycle}
\end{figure}

Let $\tilde{\delta}$ denote this cycle. We get
\begin{equation*}
\Delta_\Gamma\mathcal{I}=\int_{\tilde{\delta}}\xi =2\pi i\,\textrm{Res} (\xi, +\infty) ,
\end{equation*}
since, by hypothesis, $\xi=c_1\frac{\tilde{\lambda}d\tilde{\lambda}}{\tilde{\mu}}+c_2\frac{d\tilde{\lambda}}{\tilde{\mu}}$, where $(c_1, c_2)\in\CC^2$, hence, $\xi$ only has (at most) poles at infinity.

Finally, computing $\textrm{Res} (\xi, +\infty)$, we arrive at
\begin{align*}
\textrm{Res} (\xi, +\infty)&= \textrm{Res} \left(\frac{c_1\frac{1}{z}+c_2}{\sqrt{\tilde{Q}(\frac{1}{z})}}(\frac{-1}{z^2})dz,0\right)\\
&=-\frac{c_1}{\sqrt{a_4}} ,
\end{align*}
where $a_4$ is the leading coefficient of $Q(\tilde{\lambda})$.
\end{proof}

\begin{proof}[Proof of Theorem~\ref{cor.main}]
By definition of $\eta$, we have
\begin{equation*}
\int_{0}^{T}\dot{\theta}dt=\int_{\mathcal{L}(\gamma)}\frac{\dot{\theta}}{\dot{\tilde{\lambda}}}d\tilde{\lambda}=\int_{\mathcal{L}(\gamma)}\eta .
\end{equation*}
Then, using Theorem~\ref{Teo.main} we obtain that the variation of the rotation number when the constants of motion vary along $\Gamma$ is given by $\Delta_{\Gamma}\Theta=2\pi$.

A basis for the homology group of a regular torus $\mathcal{EM}^{-1}(h,k)$ can be defined as $\{\gamma_K,\gamma_H\}$, where $\gamma_K$ is the closed orbit of the flow generated by $K$ starting at a point $(q,p)\in \mathcal{EM}^{-1}(h,k)$ and $\gamma_H$ is the cycle obtained by concatenating $\gamma$ and the orbit of $K$ between this final point and the initial one (this is equivalent to follow the flow of $K$ for a time $-\Theta$).

Since we have a global $\mathbb{S}^1$-action, the first element of the basis remains the same after a turn around $\Gamma$. The second element, $\gamma_H$, transforms into itself plus the cycle $\gamma_K$, since $\Delta_{\Gamma}\Theta=2\pi$.

Thus, the Monodromy matrix in the basis described before is
\begin{equation*}
\mathbb{M}=\begin{pmatrix}
1 & 1\\
0 & 1
\end{pmatrix} .
\end{equation*}
\end{proof}

\section{Examples}\label{sec.exam}
\subsection{The Jaynes-Cummings model}
As a first example, we consider the classical Jaynes-Cummings model (JC). Its quantum counterpart has been widely studied as a basic model system in quantum optics~\cite{jaynes1963,dicke1954} describing the interaction of a two-level quantum system with a quantized mode of an optical cavity~\cite{haroche2001}. The global dynamics of the classical version has been recently explored in a series of papers showing, in particular, the possible non-trivial Monodromy of this integrable system~\cite{babelon2009,babelon2012,babelon2015,pelayo2012,dullin2019,kloc2017}. Some of these studies~\cite{babelon2012,babelon2015} use the Lax pair formalism that we propose to implement by applying the general approach presented in this paper.

The classical JC describes the interaction of a classical spin coupled to a Harmonic oscillator on the phase space $\mathbb{S}^2\times \mathbb{R}^2$. The Hamiltonian of the system can be expressed as
\begin{equation}\label{eqhamjc}
H=2\omega_0S_z+\omega \bar{b}b+g(\bar{b}S_-+bS_+)
\end{equation}
where $\omega_0$, $\omega$ and $g$ are real constants representing respectively the frequencies of the spin and of the Harmonic oscillator and the coupling strength between the two sub-systems. The coordinates describing the spin and the Harmonic oscillator are respectively denoted by $(S_x,S_y,S_z)$ and $(b,\bar{b})$ with the constraint $S_x^2+S_y^2+S_z^2=S_0^2$, where $S_0$ is a positive constant and $\bar{b}$ is the complex conjugate of $b$. To simplify the description, we also introduce the components $S_+$ and $S_-$ given by
\begin{align*}
S_+&=S_x+iS_y \\
S_-&=S_x-iS_y .
\end{align*}
The spin dynamics is obtained from the following Poisson bracket
\begin{equation*}
\{S_a,S_b\}=\varepsilon_{abc}S_c,
\end{equation*}
where $\varepsilon_{abc}$ is the completely anti-symmetric tensor with indices $a$, $b$ and $c$ belonging to the set $\{x,y,z\}$. Note that
\begin{align*}
\{S_{\pm},S_z\}&=\pm iS_{\pm} \\
\{S_+,S_-\}&=-2iS_z .
\end{align*}
For the Harmonic oscillator, we have $\{b,\bar{b}\}=-i$ which can be deduced from $b=\frac{1}{\sqrt{2}}(q+ip)$ and $\bar{b}=\frac{1}{\sqrt{2}}(q-ip)$ and the relation $\{q,p\}=1$ where $q$ and $p$ are the real position and momentum of the oscillator.

The Hamiltonian $H$ defines a Liouville integrable dynamic on a phase space of dimension four. The second constant of motion, $K$, can be written as $K=S_z+\bar{b}b$ and verifies $\{K,H\}=0$. The Hamiltonian dynamics are then governed by the following differential equations and their complex conjugate
\begin{eqnarray}\label{eqdynJC}
& &\dot{S}_+=\{S_+,H\}=2i\omega_0S_+-2ig\bar{b}S_z \nonumber \\
& &\dot{S}_z=ig\bar{b}S_--igbS_+ \\
& &\dot{b}=-i\omega b-igS_-\nonumber
\end{eqnarray}
A standard computation of the bifurcation diagram~\cite{babelon2012} shows that this system has a focus-focus singularity for some values of the parameters, and thus a non-trivial Monodromy. The corresponding isolated singular value of the bifurcation diagram has the coordinates $(h_0,k_0)=(2\omega_0S_0,S_0)$. Note that the dynamic has another fixed point in $(-2\omega_0S_0,-S_0)$. Figure~\ref{figBDJC} displays the bifurcation diagram for a specific set of parameters.
\begin{figure}[h!]
  \centering
    \includegraphics[scale=0.55]{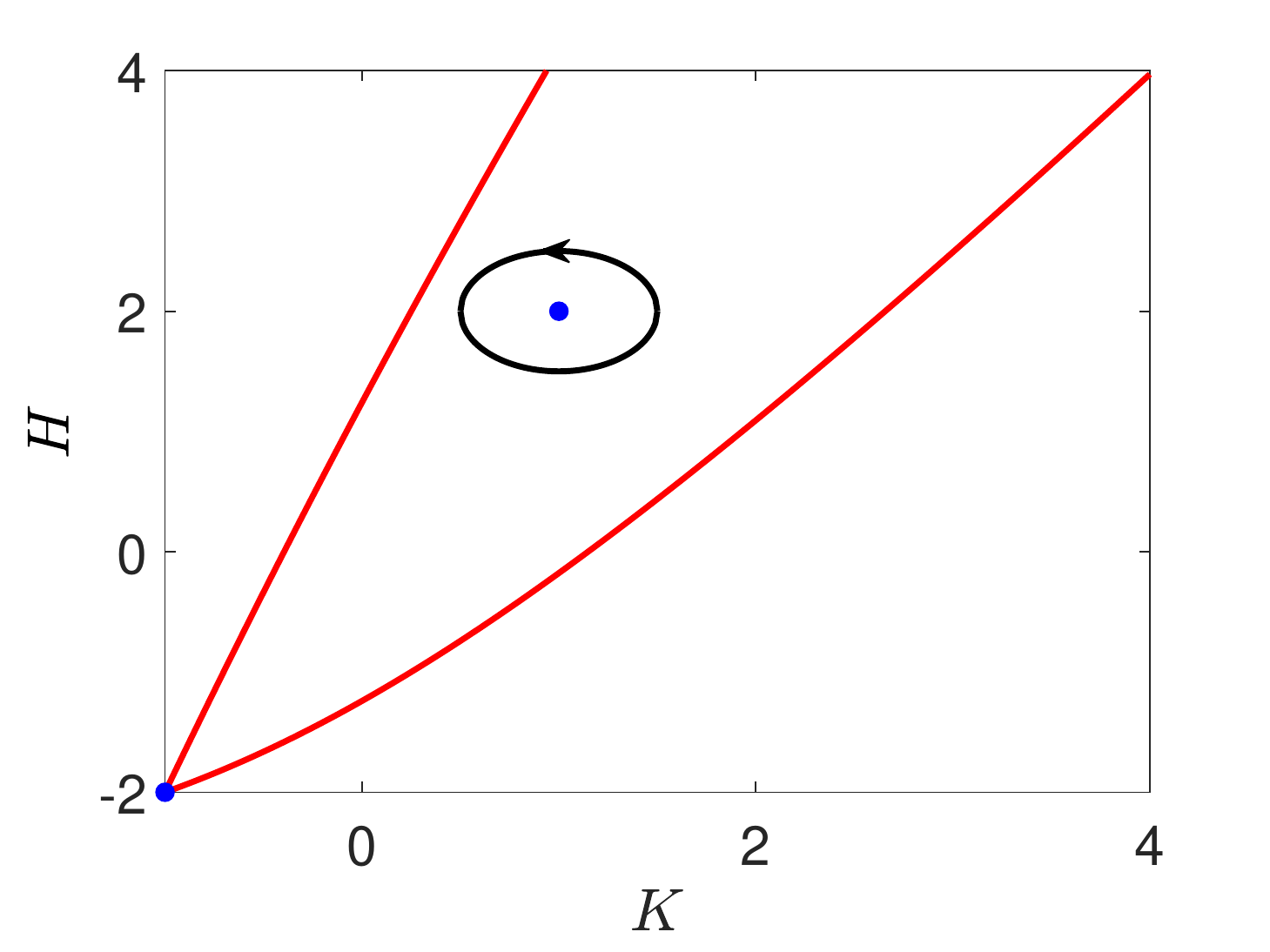}
    \caption{Plot of the JC bifurcation diagram. Numerical parameters are set to $S_0=1$, $\omega_0=1$, $\omega=2$ and $g=1$. The points for which $\mathcal{EM}$ is of ranks 0 and 1 are respectively plotted in blue and red. The blue points have the coordinates $(S_0,2\omega_0S_0)$ and $(-S_0,-2\omega_0S_0)$. A black loop around the focus-focus point is displayed in black.}
    \label{figBDJC}
\end{figure}

\noindent\textbf{The Lax Pair approach.}\\
We now describe the dynamics by using Lax pairs. We consider the Lax matrices $L$ and $M$ defined by
\begin{equation*}
L(\lambda)=
\begin{pmatrix}
\frac{(2\lambda-\omega)(\lambda-\omega_0)+g^2S_z}{g^2} & \frac{2b(\lambda-\omega_0)}{g}+S_-\\
\frac{2\bar{b}(\lambda-\omega_0)}{g}+S_+ & \frac{(\omega-2\lambda)(\lambda-\omega_0)-g^2S_z}{g^2}
\end{pmatrix}
\end{equation*}
and
\begin{equation*}
M(\lambda)=\begin{pmatrix}
-i\lambda & -igb \\
-ig\bar{b} & i\lambda
\end{pmatrix} .
\end{equation*}
The matrix $L$ has the form described by Eq.~\eqref{eq.matL}. The Lax equation reads
\begin{equation*}
\dot{L}(\lambda)=[M(\lambda),L(\lambda)],
\end{equation*}
and is equivalent to Eq.~\eqref{eqdynJC}. The spectral curve is given by
\begin{equation}\label{eqRSJC}
\mu^2=A^2+BC=Q(\lambda),
\end{equation}
where $Q(\lambda)$ is a polynomial of order four in $\lambda$ which can be expressed as
\begin{equation*}
Q(\lambda)=\frac{(2\lambda-\omega)^2}{g^4}(\lambda-\omega_0)^2+\frac{4}{g^2}K(\lambda-\omega_0)^2+\frac{2}{g^2}(H-\omega K)(\lambda-\omega_0)+S_0^2.
\end{equation*}
As mentioned in the general case, it can be verified that all the coefficients of the polynomial are constants of the motion. The movement of the roots along a loop in the bifurcation diagram can be computed numerically. To this aim, we consider a loop defined by $k=k_0+r\cos(\chi)$ and $h=h_0+r\sin(\chi)$, with $r=0.5$ and $\chi\in [0,2\pi]$. We denote by $\chi_c$ the angle defined by $\chi_c=\pi+\arctan(2\omega_0)$, which leads to the point $(h_c=h_0+r\sin(\chi_c),k_c=k_0+r\cos(\chi_c)$. The roots for this specific loop are displayed in Fig.~\ref{figrootJC}.
\begin{figure}[h]
	\centering
	\includegraphics[width = 0.55\textwidth]{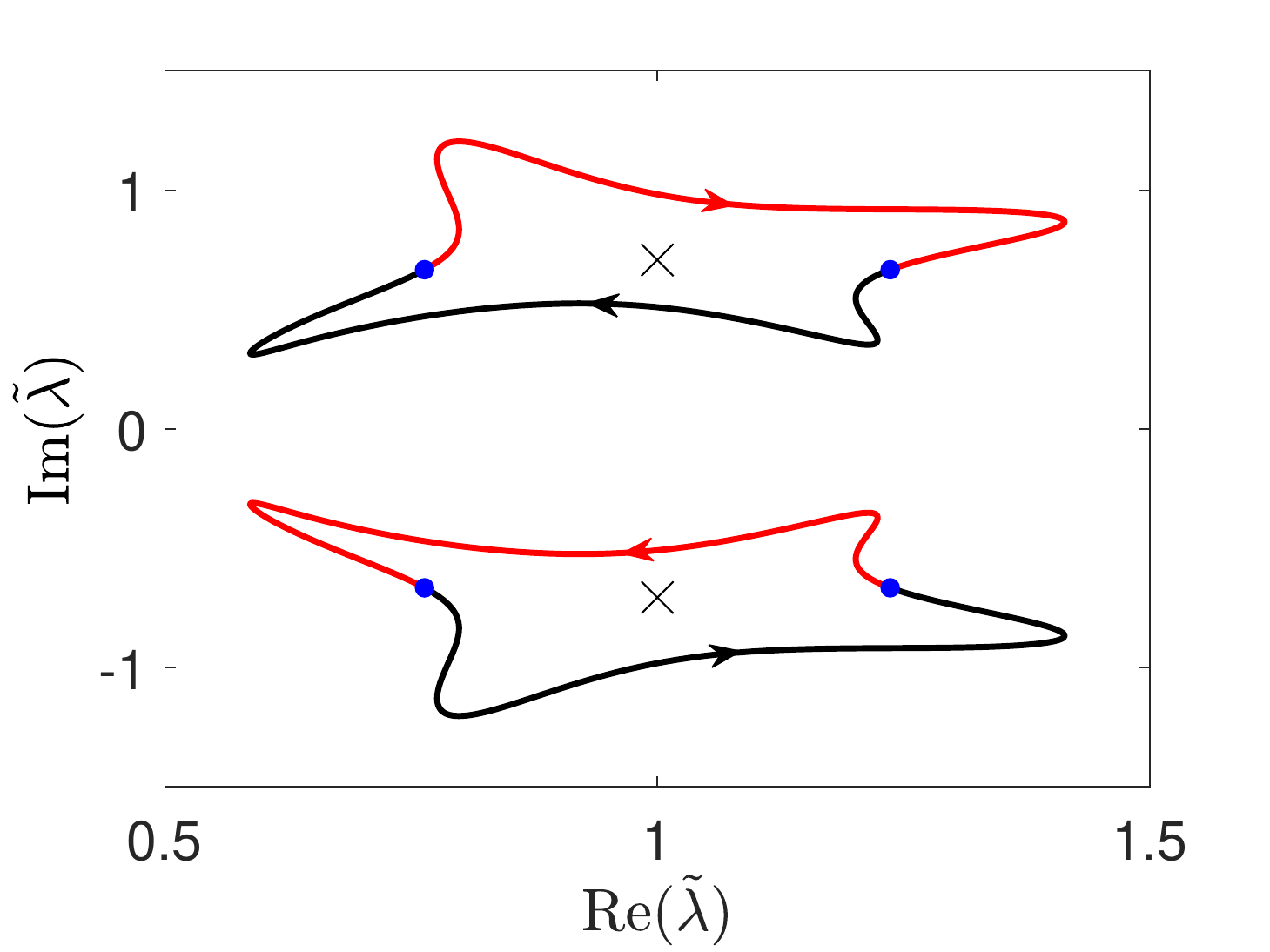}
	\caption{Plot of the movement of the roots of the polynomial $Q$ along the loop of Fig.~\ref{figBDJC}. Numerical parameters are set to $S_0=1$, $\omega_0=1$, $\omega=2$ and $g=1$. The crosses correspond to $(k_0,h_0)$ and the blue points to $(k_c,h_c)$.}
	\label{figrootJC}
\end{figure}

The Hamiltonian dynamics on this surface are obtained from the functions $(\tilde{\lambda},\tilde{\mu})$. We recall that $\tilde{\lambda}$ is the solution of the implicit equation $C(\lambda)=0$ and we consider for $\tilde{\mu}$ the component $\tilde{\mu}=A(\tilde{\lambda})$. We deduce that
\begin{eqnarray}\label{eqmulambdaJC}
& &\tilde{\lambda}=\omega_0-\frac{g}{2}\frac{S_+}{\bar{b}} ,\\
& &\tilde{\mu}=\left(\frac{\omega-2\omega_0}{2g}\right)\frac{S_+}{\bar{b}}+\frac{S_+^2}{2\bar{b}^2}+S_z .\nonumber
\end{eqnarray}
It is then straightforward to verify that
\begin{equation*}
\{\tilde{\lambda},\tilde{\mu}\}=\{-\frac{gS_+}{2\bar{b}},S_z\}=i(\tilde{\lambda}-\omega_0)
\end{equation*}
and $\{\tilde{\lambda},K\}=0$, $\{\tilde{\mu},K\}=0$, \textit{i.e.}, $\tilde{\lambda}$ and $\tilde{\mu}$ are coordinates of the reduced phase space. Figure~\ref{figflowJC} illustrates numerically the Hamiltonian trajectory on the Riemann surface. This loop can be derived either using the Hamiltonian flow in the original coordinates given by Eq.~\eqref{eqdynJC} and the relations~\eqref{eqmulambdaJC} or by using the Riemann surface~\eqref{eqRSJC}. In this latter case, we have
\begin{equation*}
\dot{\tilde{\lambda}}=\{\tilde{\lambda},H\}=\frac{g^2}{2(\tilde{\lambda}-\omega_0)}\{\tilde{\lambda},\tilde{\mu}^2\}=ig^2\tilde{\mu} ,
\end{equation*}
which leads to
\begin{equation*}
\dot{\tilde{\lambda}}=ig^2\sqrt{Q(\tilde{\lambda})} .
\end{equation*}
\begin{figure}[h]
	\centering
	\includegraphics[width = 0.55\textwidth]{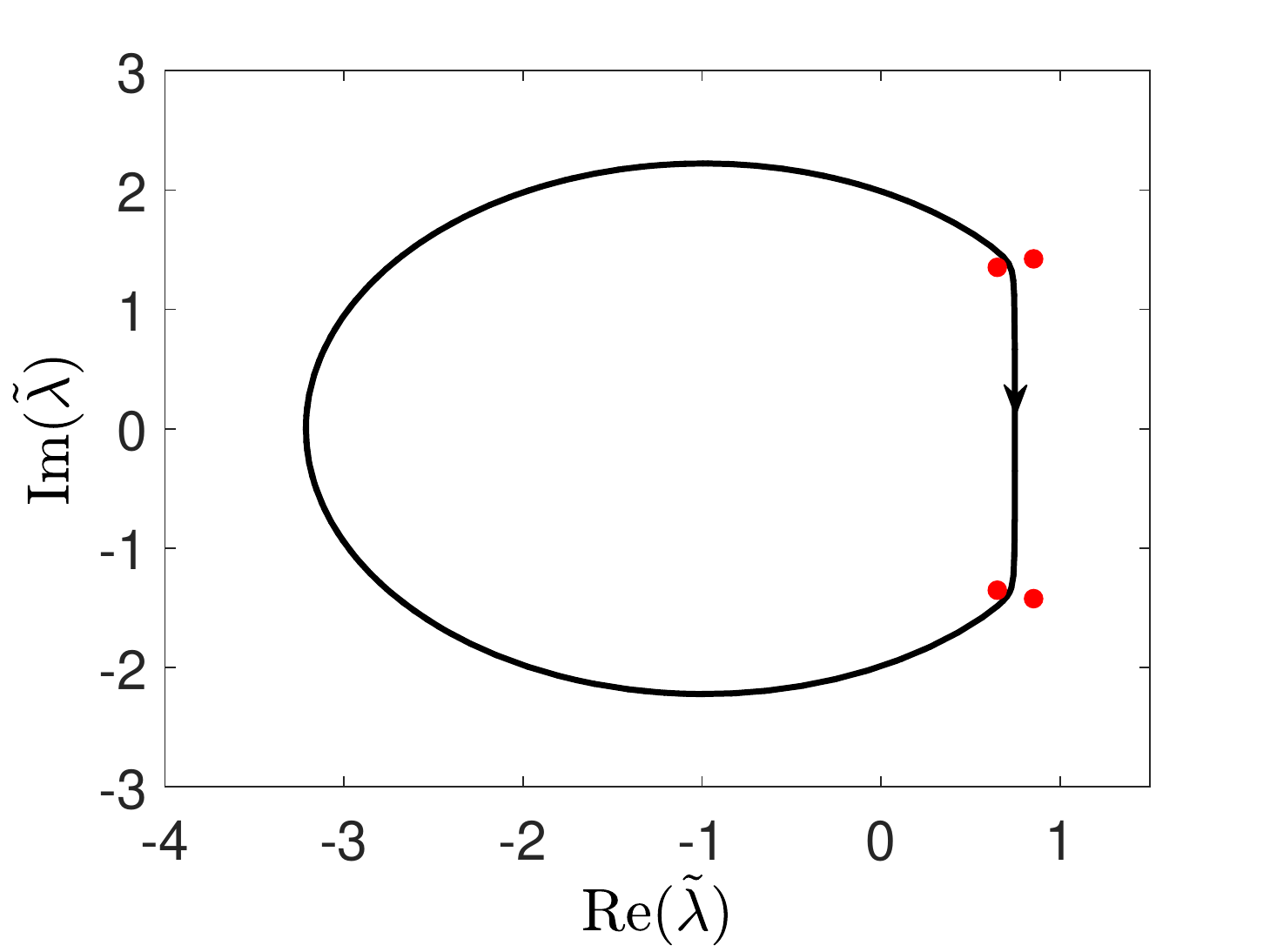}
	\caption{Plot of the Hamiltonian flow onto the Riemann surface for $k_1=k_0-0.01$ and $h_1=h_0$. The red points correspond to the position of the branch points of the Riemann surface.}
	\label{figflowJC}
\end{figure}

\noindent\textbf{The normal form.}\\
The next step of our general approach consists in analyzing the polynomial $Q_{h,k}(\tilde{\lambda})$ around the point $(h_0,k_0)$ in the bifurcation diagram. Since $Q\in\mathbb{R}[\tilde{\lambda}]$, we have $\overline{Q(\tilde{\lambda})}=Q(\bar{\tilde{\lambda}})$ which implies that the roots of $Q_{k,h}$ are pairwise complex conjugate. We first compute $Q_{h_0,k_0}(\tilde{\lambda})$ and its roots:
\begin{align*}
Q_{h_0,k_0}(\tilde{\lambda})&=\left(\frac{(2\tilde{\lambda}-\omega)(\tilde{\lambda}-\omega_0)}{g^2}+S_0\right)^2\\
&=\frac{4}{g^4}\left(\tilde{\lambda}^2-\frac{2\omega_0+\omega}{2}\tilde{\lambda}+\frac{\omega_0\omega+g^2S_0}{2}\right)^2
\end{align*}
Since this polynomial has conjugate roots, we also have
\begin{align*}
Q_{h_0,k_0}(\tilde{\lambda})&=\frac{4}{g^4}((\tilde{\lambda}-\lambda_0)(\tilde{\lambda}-\bar{\lambda}_0))^2\\
&=\frac{4}{g^4}(\tilde{\lambda}^2-2\textrm{Re\,}(\lambda_0)\tilde{\lambda}+||\lambda_0||^2)^2,
\end{align*}
and we arrive at
\begin{align*}
\textrm{Re\,}(\lambda_0)&=\frac{2\omega_0+\omega}{4}\\
\textrm{Im\,}(\lambda_0)&=\left(\frac{g^2S_0}{2}-\left(\frac{\omega-2\omega_0}{4}\right)^2\right)^{\frac{1}{2}}.
\end{align*}
Then, we apply the general procedure described in Sec.~\ref{sec.tech} to transform this polynomial to the one of the normal form. We have:
\begin{eqnarray*}
& & Q_{h,k}(\tilde{\lambda})=\frac{4}{g^4}\tilde{\lambda}^4-\frac{4}{g^4}(\omega+2\omega_0)\tilde{\lambda}^3\\
& & +\frac{4}{g^4}((\omega_0+\frac{\omega}{2})^2+\omega\omega_0+g^2k)\tilde{\lambda}^2\\
& & -\frac{4}{g^4}(\omega\omega_0(\omega_0+\frac{\omega}{2})+\frac{g^2}{2}(k(\omega+4\omega_0)-h))\tilde{\lambda}\\
& & +\frac{4}{g^4}(\frac{\omega^2\omega_0^2}{4}+\frac{g^4S_0^2}{4}+\frac{\omega_0g^2}{2}(k(\omega+2\omega_0)-h))
\end{eqnarray*}
From a direct calculus, we obtain that the derivative at $(h_0,k_0)$ of the transformation $F$ defined in Lemma~\ref{lem.difeo} has the following form
\begin{equation*}
D=
\begin{pmatrix}
\frac{g^2(\omega_0- \textrm{Re} (\lambda_0))}{2 \textrm{Im} (\lambda_0)^4} & \frac{g^2(s)}{2 \textrm{Im} (\lambda_0)^4}\\
\frac{g^2}{2 \textrm{Im} (\lambda_0)^3} & \frac{g^2(2-(\omega+4\omega_0))}{2 \textrm{Im} (\lambda_0)^3}
\end{pmatrix}
\end{equation*}
where
\begin{equation*}
s=2 \textrm{Im} (\lambda_0)^2-2 \textrm{Re} (\lambda_0)^2 + \textrm{Re} (\lambda_0) (\omega+4\omega_0) - \omega_0(\omega+2\omega_0).
\end{equation*}
The determinant of the matrix $D$ is
\begin{eqnarray*}
& &  \det(D)=\frac{g^4}{4 \textrm{Im} (\lambda_0)^7}( -\textrm{Im} (\lambda_0)^2+2 \textrm{Re} (\lambda_0)^2 \\
& & - \textrm{Re} (\lambda_0) (\omega+4\omega_0) + \omega_0(\omega+2\omega_0)\\
& &+(\omega_0- \textrm{Re} (\lambda_0))(2-(\omega+4\omega_0))).
\end{eqnarray*}
Substituting the values of $\textrm{Re} (\lambda_0)$ and $\textrm{Im} (\lambda_0)$, we obtain
\begin{equation*}
\det D=\frac{g^4(-g^2 S_0-\frac{\omega_0^2}{2}-\frac{\omega}{2}+\frac{\omega^2}{4}+\omega_0)}{4 \textrm{Im} (\lambda_0)^7}.
\end{equation*}
If $\det D\neq 0$, then $Q$ can be transformed to the normal form polynomial.\\
\noindent\textbf{Hamiltonian Monodromy.}\\
We have now all the tools in hand to compute the Monodromy matrix. The angle $\theta$ conjugate to $K$ verifies $\{\theta,K\}=1$ and $\{\theta,\tilde{\lambda}\}=\{\theta,\tilde{\mu}\}=0$. This result can be derived from the original coordinates of the phase space as shown in Appendix~\ref{appA}. Using Eq.~\eqref{eqRSJC}, we have
\begin{equation*}
\dot{\theta}=\{\theta,H\}=\{\theta,-2K(\tilde{\lambda}-\omega_0)+\omega K\}
\end{equation*}
which gives
\begin{equation*}
\dot{\theta}=\omega+2\omega_0-2\tilde{\lambda}
\end{equation*}
and we obtain the following expression for the one-form $d\theta$ as a function of $\tilde{\lambda}$ and $\tilde{\mu}$
\begin{equation}\label{dthetaJC}
d\theta=\frac{((\omega+2\omega_0)-2\tilde{\lambda})d\tilde{\lambda}}{ig^2\tilde{\mu}} .
\end{equation}
From Eq.~\eqref{dthetaJC}, it is clear that $d\theta$ is of the form described in Theorem~\ref{Teo.main}. We deduce that the residue of this form at infinity is equal to $\frac{2}{ig^2}$ over the square root of the leading coefficient of the polynomial $Q$, \textit{i.e.},
\begin{equation*}
\frac{\frac{2}{ig^2}}{\sqrt{\frac{4}{g^4}}}=\frac{2g^2}{2g^2i}=\frac{1}{i}.
\end{equation*}
Finally, using Theorem~\ref{cor.main} we obtain that the Monodromy matrix is
\begin{equation*}
\mathbb{M}=
\begin{pmatrix}
1 & 1\\
0 & 1
\end{pmatrix} .
\end{equation*}
\subsection{The spherical pendulum}
As a second example, we consider the historical system for which a non-trivial Monodromy was for the first time highlighted~\cite{duistermaat}, namely the spherical pendulum. Since then, this system has been extensively studied both from a classical and quantum points of view~\cite{efstathioubook,cushmannbook,cusman1988,beukers,audin2002,guillemin1989}. The corresponding Lax pair formalism has been described in~\cite{audin2002,gavrilov2002}. The spherical pendulum consists in a mass moving without friction on a sphere. The dynamics are governed on the phase space $TS^2$ by the following Hamiltonian expressed in dimensionless coordinates~\cite{cushmannbook}:
\begin{equation*}
H=\frac{1}{2}(p_x^2+p_y^2+p_z^2)+z
\end{equation*}
with the constraints $x^2+y^2+z^2=1$ and $xp_x+yp_y+zp_z=0$. The
dynamical equations can be expressed as
\begin{align*}
\dot{\vec{q}}&=\vec{p} \\
\dot{\vec{p}}&=\vec{e}_z-(p_x^2+p_y^2+p_z^2-z)\vec{q}
\end{align*}
where $\vec{e}_z=(0,0,-1)$ is a unit vector along the $z$- direction. We introduce the angular momentum $\vec{K}=\vec{q}\times \vec{p}$ with $\dot{\vec{K}}=\vec{q}\times\vec{e}_z$. We deduce that the system is Liouville-integrable since it has a second constant of motion, $K=K_z$ such that $\{H,K\}=0$. It can be shown that the spherical pendulum has a non-trivial Monodromy due to a focus-focus singularity corresponding to the point $(h_0,k_0)=(1,0)$ of the bifurcation diagram~\cite{efstathioubook,cushmannbook}.

\noindent\textbf{The Lax Pair approach.}\\
We introduce the coordinates $(L_x,L_y,L_z)$ and $(M_x,M_y,M_z)$ defined as
\begin{eqnarray*}
& & L_x=x-\lambda K_x \\
& & L_y=y-\lambda K_y \\
& & L_z=z-\lambda K_z+\lambda^2
\end{eqnarray*}
and
\begin{eqnarray*}
& & M_x=K_x \\
& & M_y=K_y \\
& & M_z=K_z-\lambda
\end{eqnarray*}
The Lax matrices $L$ and $M$ which satisfy $\dot{L}=[M,L]$ can then be expressed as:
\begin{eqnarray*}
& & L=L_x\sigma_x+L_y\sigma_y+L_z\sigma_z \\
& & M=\frac{-i}{2}(M_x\sigma_x+M_y\sigma_y+M_z\sigma_z),
\end{eqnarray*}
where $\sigma_{x,y,z}$ are the Pauli matrices. The spectral curve is given by the eigenvalues of $L$
\begin{equation*}
\mu^2=L_x^2+L_y^2+L_z^2.
\end{equation*}
A straightforward computation leads to
\begin{equation}\label{eqSPRS}
\mu^2=\lambda^4-2K\lambda^3+2H\lambda^2+1.
\end{equation}
We have
\begin{align*}
A(\lambda)&=L_z \\
C(\lambda)&=L_x-iL_y
\end{align*}
and we deduce that
\begin{equation*}
\tilde{\lambda}=\frac{x+iy}{K_x+iK_y}
\end{equation*}
and
\begin{equation*}
\tilde{\mu}=A(\tilde{\lambda})=z-\tilde{\lambda} K_z+\tilde{\lambda}^2
\end{equation*}
As expected, it can be verified that
\begin{equation*}
\{\tilde{\lambda},K\}=\{\tilde{\mu},K\}=0
\end{equation*}
Using $\{K_x,z\}=-y$ and $\{K_y,z\}=x$, we get
\begin{equation*}
\{\tilde{\lambda},\tilde{\mu}\}=\{\tilde{\lambda},z\}=-i\tilde{\lambda}^2.
\end{equation*}
Finally, from Eq.~\eqref{eqSPRS}, we arrive at
\begin{equation*}
\dot{\tilde{\lambda}}=\{\tilde{\lambda},H\}=\frac{\tilde{\mu}}{\tilde{\lambda}^2}\{\tilde{\lambda},\tilde{\mu}\}=-i\tilde{\mu}.
\end{equation*}

\noindent\textbf{The normal form.}\\
As in the JC model, we apply the procedure described in Sec.~\ref{sec.tech} to transform the polynomial
\begin{equation*}
Q_{h,k}(\tilde{\lambda})=\tilde{\lambda}^4-2k\tilde{\lambda}^3+2h\tilde{\lambda}^2+1
\end{equation*}
to the normal form polynomial. The polynomial $Q_{h_0,k_0}(\tilde{\lambda})$ given by
\begin{equation*}
Q_{h_0,k_0}(\tilde{\lambda})=\tilde{\lambda}^4+2\tilde{\lambda}^2+1 ,
\end{equation*}
has the roots $\pm i$. Following Lemma~\ref{lem.difeo}, we obtain that the derivative at $(h_0,k_0)$ of the transformation $F$ is
\begin{equation*}
D=\begin{pmatrix}
0 & 2\\
2 & 0
\end{pmatrix} .
\end{equation*}
Since $\det D\neq 0$, the polynomial $Q$ can be transformed to the normal form.\\
\noindent\textbf{Hamiltonian Monodromy.}\\
The last step to compute the Monodromy matrix consists in expressing the one-form $d\theta$ in the coordinates $(\tilde{\lambda},\tilde{\mu})$. The angle $\theta$ conjugate to $K$ verifies $\{\theta,K\}=1$ and $\{\theta,\tilde{\lambda}\}=\{\theta,\tilde{\mu}\}=0$. Note that the expression of $\theta$ as a function of the original coordinates of the phase space is derived in Appendix~\ref{appB}. Starting from Eq.~\eqref{eqSPRS}, we deduce that
\begin{equation*}
\dot{\theta}=\{\theta,H\}=\tilde{\lambda}.
\end{equation*}
Finally, we arrive at
\begin{equation}\label{eq1formSP}
d\theta=\dot{\theta}\frac{d\tilde{\lambda}}{\dot{\tilde{\lambda}}}= i\frac{\tilde{\lambda}d\tilde{\lambda}}{\tilde{\mu}}.
\end{equation}
The one-form given in Eq.~\eqref{eq1formSP} corresponds to the expression described in Theorem~\ref{Teo.main} and its residue at infinity is equal to $-i=\frac{1}{i}$. Using Theorem~\ref{cor.main} we conclude that the Hamiltonian Monodromy matrix is
\begin{equation*}
\mathbb{M}=
\begin{pmatrix}
1 & 1\\
0 & 1
\end{pmatrix} .
\end{equation*}
\section{Quasi Lax pair of a system with a focus-focus singularity}\label{secquasi}
We have shown in two examples that the general method described in this paper can be applied to classical systems whose Lax pairs are known. Another issue is to prove with this complex approach the non-trivial Monodromy of any system with a focus-focus singularity. We come up against the difficulties of deriving Lax pairs for which no general algorithm exists. We propose in this section a partial answer to this open question by constructing a quasi-Lax pair of the dynamical system, in the sense that the dynamical equations are only reproduced up to some terms (which are negligible in a neighborhood of the focus-focus singularity) by the Lax equation.

For that purpose, we consider a two-degree of freedom system with the coordinates $(a,\bar{a})$ and $(b,\bar{b})$ such that $\{a,\bar{a}\}=\{b,\bar{b}\}=-i$. The Hamiltonian defined as
\begin{equation*}
H=ab+\bar{a}\bar{b}
\end{equation*}
is Liouville-integrable with the constant of the motion $K$ given by $K=\bar{b}b-a\bar{a}$. This system corresponds to a 1:-1 resonant system which is the non-compact local normal form for dynamics with a single focus-focus point in $a=b=0$. The nontrivial Monodromy of the Hamiltonian dynamics is e.g. shown in~\cite{efstathiou2017} with standard techniques. The Hamiltonian dynamics are governed in a neighborhood $V$ of the focus-focus point by the following differential equations
\begin{eqnarray}\label{eqab}
\dot{a}=-i\bar{b};~\dot{\bar{a}}=ib \nonumber \\
\dot{b}=-i\bar{a};~\dot{\bar{b}}=ia
\end{eqnarray}
The quasi Lax pair is defined by the two matrices $L$ and $M$ given by
\begin{equation*}
L=(\lambda^2+1-a\bar{a}/2)\sigma_z+(b\lambda+\bar{a})\sigma_++(\bar{b}\lambda+a)\sigma_-
\end{equation*}
and
\begin{equation*}
M=\frac{1}{2}(-i\lambda \sigma_z-ib\sigma_+-i\bar{b}\sigma_-)
\end{equation*}
where we use the Pauli matrices $\sigma_{x,y,z}$ and $\sigma_\pm=\frac{1}{2}(\sigma_x\pm i\sigma_y)$. We find the equations of motion from $\dot{L}=[M,L]$ in which
terms of order larger than (or equal to) 3 are neglected. Note the similarity between this derivation and the one for the JC model.

In this case, we define (following the usual formalism) the spectral curve by $\det(L(\lambda)-\mu I)=0$ and the new set of coordinates $(\tilde{\lambda},\tilde{\mu})$ where $\tilde{\mu}=A(\tilde{\lambda})$ and $\tilde{\lambda}$ is the solution of the implicit equation $C(\lambda)=0$.

We stress that the spectral curve
\begin{equation}\label{eqscnew}
\tilde{\mu}^2=\tilde{\lambda}^4+(2+k)\tilde{\lambda}^2-h\tilde{\lambda}+1-\frac{a^2\bar{a}^2}{4}
\end{equation}
depends on time (not all its coefficients are constants of motion) but if we neglect terms of order larger than three we directly obtain the normal form of the spectral curve derived in Sec.~\ref{Sec.normalform}.

At this level appears a major difficulty in the method proposed in this Section, the last term of the right-hand side of Eq.~\eqref{eqscnew}, $-\frac{a^2\bar{a}^2}{4}$, causes the Riemann surface to be different for each point of the Hamiltonian flow. More specifically, the image of each point of the Hamiltonian flow under the mapping $(\tilde{\lambda},\tilde{\mu})$ belongs to a different Riemann surface which is a small perturbation of the one given by the normal form of the spectral curve. This fact makes the use of the spectral curve more difficult.

However, the change of variables
\begin{eqnarray*}
& & \tilde{\lambda}=-\frac{a}{\bar{b}} ,\\
& & \tilde{\mu}=\tilde{\lambda}^2+1-\frac{a\bar{a}}{2} ,
\end{eqnarray*}
achieved from the quasi-Lax pair, can be applied to derive a new expression of the Liouville two-form. This latter can either be expressed as
\begin{equation*}
\Omega=ida\wedge d\bar{a}+idb\wedge d\bar{b}
\end{equation*}
or
\begin{equation}\label{eqtwoformq}
\Omega=-id\ln(\bar{b})\wedge dK-\frac{2i}{\tilde\lambda}d\tilde\lambda\wedge d\tilde\mu
\end{equation}
and, using Eq.~\eqref{eqtwoformq}\footnote{Note the similarity of this expression with the ones derived in the appendices for the JC model and the Spherical Pendulum.}, we obtain the one-form $d\theta=-id\ln(\bar{b})$ which leads to $\dot\theta=-\tilde{\lambda}$. We also have $\{\tilde\lambda,\tilde\mu\}=\frac{i}{2}\tilde\lambda$ and, using Eq.~\eqref{eqab}, we arrive at
\begin{equation}
\dot{\tilde{\lambda}}=\{\tilde\lambda,H\}=i(1+\tilde{\lambda}^2).
\end{equation}

In order to analyze the monodromy of the system in a small neighborhood $V$ of the focus-focus point (we can consider, e.g., a ball of radius $R$ small enough such that $\|a\|^2+\|b\|^2\leq R^2$), we have to work with relative cycles (see, e.g.,~\cite{efstathiou2017} for details). Thus, the rotation number $\Theta$ can be expressed as
\begin{equation}\label{eqquasitheta}
\Theta=\int_{\gamma_{rel}} d\theta=\int_{\tilde{\gamma}_{\textrm{rel}}}\frac{-\tilde\lambda}{i(1+\tilde{\lambda}^2)}d\tilde\lambda,
\end{equation}
where the relative cycle $\gamma_{\textrm{rel}}$ is the part of the Hamiltonian flow contained in $V$ and its image under the mapping $\tilde{\lambda}$ is $\tilde{\gamma}_{\textrm{rel}}$. The variation of the rotation number is given by
\begin{equation*}
\Delta\Theta=\int_{\Delta\tilde{\gamma}_{rel}}\frac{-\tilde\lambda}{i(1+\tilde{\lambda}^2)}d\tilde\lambda.
\end{equation*}
It is then enough to study the behavior of $\tilde{\lambda}$ over the Hamiltonian flow in $V$.

Integrating Eq.~\eqref{eqab}, we obtain that the Hamiltonian flow in $V$ is given by
\begin{eqnarray*}
a(t)=a_0\cosh(t)-i\bar{b}_0\sinh(t) \\
\bar{b}(t)=\bar{b}_0\cosh(t)+ia_0\sinh(t)
\end{eqnarray*}
where $a_0$ and $\bar{b}_0$ are the initial values at $t=0$ of $a$ and $\bar{b}$. The next step consists in describing the Hamiltonian flow in the new variable $\tilde{\lambda}$. This behavior is given by the following Lemma.
\begin{lemma}\label{lemma.prope}
The mapping $\tilde{\lambda}$, restricted to $V$, has the following properties
\begin{enumerate}
\item On the singular fiber $\mathcal{EM}^{-1}(0,0)\setminus \{(0,0,0,0)\}$, the function $\tilde{\lambda}$ takes two values $i$ and $-i$ (one on each connected component).
\item In a regular fiber $\mathcal{EM}^{-1}(h,k)$, $\tilde{\lambda}$ does not take the value $i$ nor $-i$ and it is injective on each Hamiltonian trajectory.
\item $\tilde{\lambda}$ is well-defined and continuous on each regular fiber, except  for those fibers such that $h=0$ and $k<0$ where $\tilde{\lambda}$ takes the value $\infty$.\footnote{In other words, if we consider $\tilde{\lambda}\colon V\to\hat{\CC}$, then it is well defined and continuous on each regular fiber, where $\hat{\CC}$ is the Riemann sphere.}
\item In a regular fiber $\mathcal{EM}^{-1}(h,k)$, the image of the Hamiltonian flow is a curve that starts and ends respectively near $-i$ and $i$. When $(h,k)$ goes to $(0,0)$, the initial and final points converge to $-i$ and $i$.
\end{enumerate}
\end{lemma}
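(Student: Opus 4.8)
The plan is to carry out everything in the explicit complex coordinates $(a,b)\in\CC^2$, writing $H=ab+\bar a\bar b=2\,\mathrm{Re}(ab)$ and $K=|b|^2-|a|^2$, together with $\tilde\lambda=-a/\bar b$. Since the flow \eqref{eqab} is explicitly integrated, I would first solve the scalar equation $\dot{\tilde\lambda}=i(1+\tilde\lambda^2)$ in closed form. Separating variables gives $\tilde\lambda(t)=\tan(it+C)$, which after the addition formula and $\tan(it)=i\tanh t$ becomes the Möbius expression $\tilde\lambda(t)=\frac{\tilde\lambda_0+i\tanh t}{1-i\tilde\lambda_0\tanh t}$, where $\tilde\lambda_0=\tilde\lambda(0)$. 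This single formula drives items 2 and 4.

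For items 1 and 2 I would first establish the algebraic equivalence $\tilde\lambda\in\{i,-i\}\iff(h,k)=(0,0)$: indeed $\tilde\lambda=\pm i$ means $a=\mp i\bar b$, which forces $|a|=|b|$ and $ab=\mp i|b|^2$, hence $K=0$ and $H=2\,\mathrm{Re}(ab)=0$. This at once yields the regular-fiber half of item 2 and shows that on the singular fiber $\tilde\lambda$ can only take the values $\pm i$. Writing $a=re^{i\alpha}$, $b=re^{i\beta}$ on $\mathcal{EM}^{-1}(0,0)\setminus\{0\}$, the conditions $r>0$ and $\alpha+\beta\equiv\pi/2\pmod\pi$ cut out two disjoint cylinders, on which $\tilde\lambda=-e^{i(\alpha+\beta)}$ equals $-i$ and $+i$ respectively; this is item 1. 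The injectivity in item 2 follows because $t\mapsto\tanh t$ is injective and the Möbius map $s\mapsto\frac{\tilde\lambda_0+is}{1-i\tilde\lambda_0 s}$ has determinant $i(1+\tilde\lambda_0^2)$, nonzero precisely when $\tilde\lambda_0\neq\pm i$, i.e. on a regular fiber.

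Item 3 is a description of the pole locus: $\tilde\lambda=-a/\bar b$ fails to be finite exactly where $b=0$, and $b=0$ forces $H=0$ and $K=-|a|^2\le 0$. On regular fibers this occurs only when $h=0$, $k<0$, and there the points with $b=0$, $|a|^2=-k$, genuinely lie on the fiber, so the value $\infty$ is attained; viewing $\tilde\lambda$ as valued in $\hat\CC$ restores well-definedness and continuity everywhere, since $a$ and $b$ never vanish simultaneously off the critical point.

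The analytic heart is item 4. Letting $t\to\pm\infty$ in the Möbius formula (so $\tanh t\to\pm1$) and using $1-i\tilde\lambda_0=-i(\tilde\lambda_0+i)$ and $1+i\tilde\lambda_0=i(\tilde\lambda_0-i)$, I would show $\tilde\lambda(t)\to i$ as $t\to+\infty$ and $\tilde\lambda(t)\to -i$ as $t\to-\infty$, the cancellations being valid precisely because $\tilde\lambda_0\neq\pm i$ on a regular fiber. To convert these asymptotics into statements about the relative cycle $\gamma_{\mathrm{rel}}=\gamma\cap V$, I would compute $N(t)=|a(t)|^2+|b(t)|^2$ from the explicit flow, obtaining $N(t)=N_0\cosh(2t)-2\,\mathrm{Im}(a_0b_0)\sinh(2t)$ with $N_0=|a_0|^2+|b_0|^2$; rewriting this as a single hyperbolic cosine gives the minimal value $N_{\min}=\sqrt{N_0^2-4\,\mathrm{Im}(a_0b_0)^2}$, and substituting $\mathrm{Re}(a_0b_0)=h/2$ together with $|a_0|^2|b_0|^2=(N_0^2-k^2)/4$ collapses this to the clean identity $N_{\min}=\sqrt{h^2+k^2}$. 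The trajectory lies in $V=\{N\le R^2\}$ for $t$ in a symmetric interval $[t_*-\tau,t_*+\tau]$ with $\cosh(2\tau)=R^2/N_{\min}$, so as $(h,k)\to(0,0)$ the half-width $\tau\to\infty$; combined with the limits above, the entry and exit points of $\gamma_{\mathrm{rel}}$ converge to $-i$ and $i$. The main obstacle I anticipate is exactly this identity $N_{\min}=\sqrt{h^2+k^2}$ and the accompanying book-keeping linking the transit time in $V$ to the distance of $(h,k)$ from the critical value; once it is in hand the qualitative claims of item 4 are immediate, and everything else reduces to the algebra of the explicit flow.
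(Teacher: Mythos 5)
Your proposal is correct and follows essentially the same route as the paper: both work in the explicit $(a,\bar a,b,\bar b)$ coordinates with the integrated hyperbolic flow and the Riccati equation $\dot{\tilde\lambda}=i(1+\tilde\lambda^2)$, and the paper in fact only writes out item 1 (deriving $a_0=\pm i\bar b_0$ from $\|a_0\|=\|b_0\|$ and $\mathrm{Re}(a_0b_0)=0$, which is exactly your computation) before asserting that items 2--4 ``follow the same lines,'' so your M\"obius formula for $\tilde\lambda(t)$, the identity $N_{\min}=\sqrt{h^2+k^2}$, and the pole analysis at $b=0$ supply details the paper omits. The one step to tighten is the double limit at the end of item 4 --- the asymptotics $\tilde\lambda(t)\to\pm i$ are taken along a fixed fiber while the exit time depends on $(h,k)$ --- but your own normalization closes it: placing $t=0$ at the minimum of $N$ forces $\mathrm{Im}(a_0b_0)=0$, hence $\tilde\lambda_0\in\RR\cup\{\infty\}$, so the relevant M\"obius maps range over a compact family in $\hat{\CC}\setminus\{\pm i\}$ all sending $\mp 1\mapsto \mp i$, and the convergence of the entry and exit points to $-i$ and $i$ is uniform.
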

\begin{proof}[Proof of Lemma~\ref{lemma.prope}, 1.]
For the singular fiber for which $h=k=0$, the initial condition $(0,0)\neq(a_0,b_0)\in V$ of the Hamiltonian flow fulfills
\begin{align}
||a_0||=||b_0|| ,\label{eq.Norm}\\
\textrm{Re}(a_0b_0)=0 .\label{eq.RePart}
\end{align}
Note that Eq.~\eqref{eq.Norm} implies that $a_0\neq0$ and $b_0\neq0$. Using Eq.~\eqref{eq.RePart}, we get $a_0b_0=ir$ with $r\in\mathbb{R}$, which, from Eq.~\eqref{eq.Norm} leads to $\|r\|=\|b_0\|^2$. Since $\bar{b}_0=\frac{\|b_0\|^2}{b_0}$, we obtain
\begin{equation*}
a_0=\pm i\bar{b}_0 ,
\end{equation*}
for $\|r\|=\pm r$. We use this relation to compute the image of a solution curve, $\varphi(t)$, contained in the singular fiber
\begin{align*}
\tilde{\lambda}(\varphi(t))&=\frac{i\bar{b}_0\sinh(t)-a_0\cosh(t)}{\bar{b}_0\cosh(t)+ia_0\sinh(t)}\\
&=\frac{i\bar{b}_0\sinh(t)\mp i\bar{b}_0\cosh(t)}{\bar{b}_0\cosh(t)\pm i\bar{b}_0i\sinh(t)}\\
&=\mp i.
\end{align*}
We conclude that $\tilde{\lambda}$ is constant for the trajectories contained in the singular fiber and that it only takes two values, $i$ (when $a_0=- i\bar{b}_0$) and $-i$ (when $a_0=i\bar{b}_0$).
\end{proof}
The proofs of the other points of Lemmas~\ref{lemma.prope} follow the same lines as the first point. The mapping $\tilde{\lambda}$ is schematically represented in Fig.~\ref{figpinched}.
\begin{figure}[h]
	\centering
	\includegraphics[width = 0.55\textwidth]{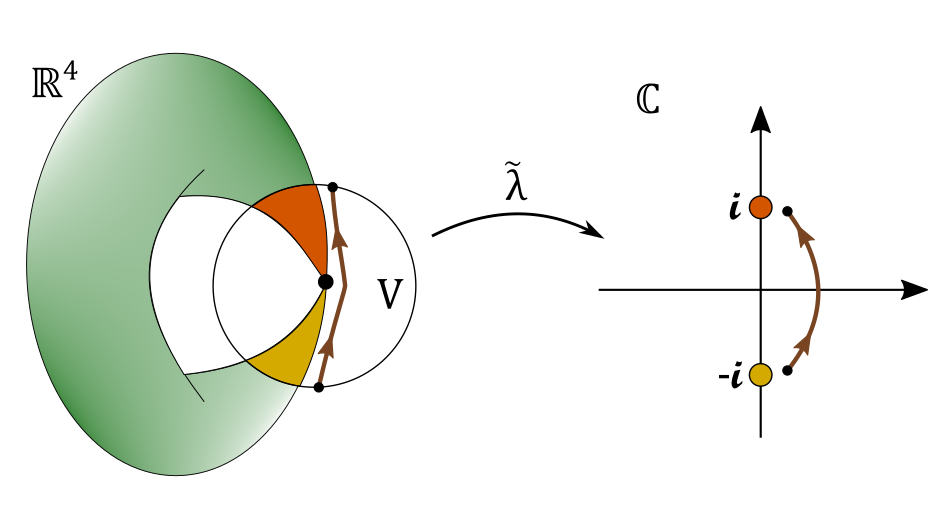}
	\caption{Schematic representation of the mapping $\tilde{\lambda}$ in a neighborhood $V$ of the focus-focus point.}
	\label{figpinched}
\end{figure}
The image of a regular fiber $\mathcal{EM}^{-1}(h,k)$ under $\tilde{\lambda}$ can be described as follows. For $h>0$ (resp. $h<0$) and $k=0$, the image is contained in the left (resp. right) part of the unitary circle. For $k>0$ (resp. $k<0$), the norm of the points in the image of the fiber is smaller (resp. larger) than 1. For $h=0$ and $k>0$, the image is contained in the imaginary axis between $i$ and $-i$, while it belongs to the imaginary axis outside of the segment between $i$ and $-i$ for $h=0$ and $k<0$. We point out that there is no intersection between the different trajectories. Indeed, they are given as solutions of the differential equation $\dot{\lambda}=1+\lambda^2$. Hence, by the uniqueness theorem for differential equations, they can not intersect unless they coincide, which is not the case.
\begin{figure}[h]
	\centering
	\includegraphics[width = 0.55\textwidth]{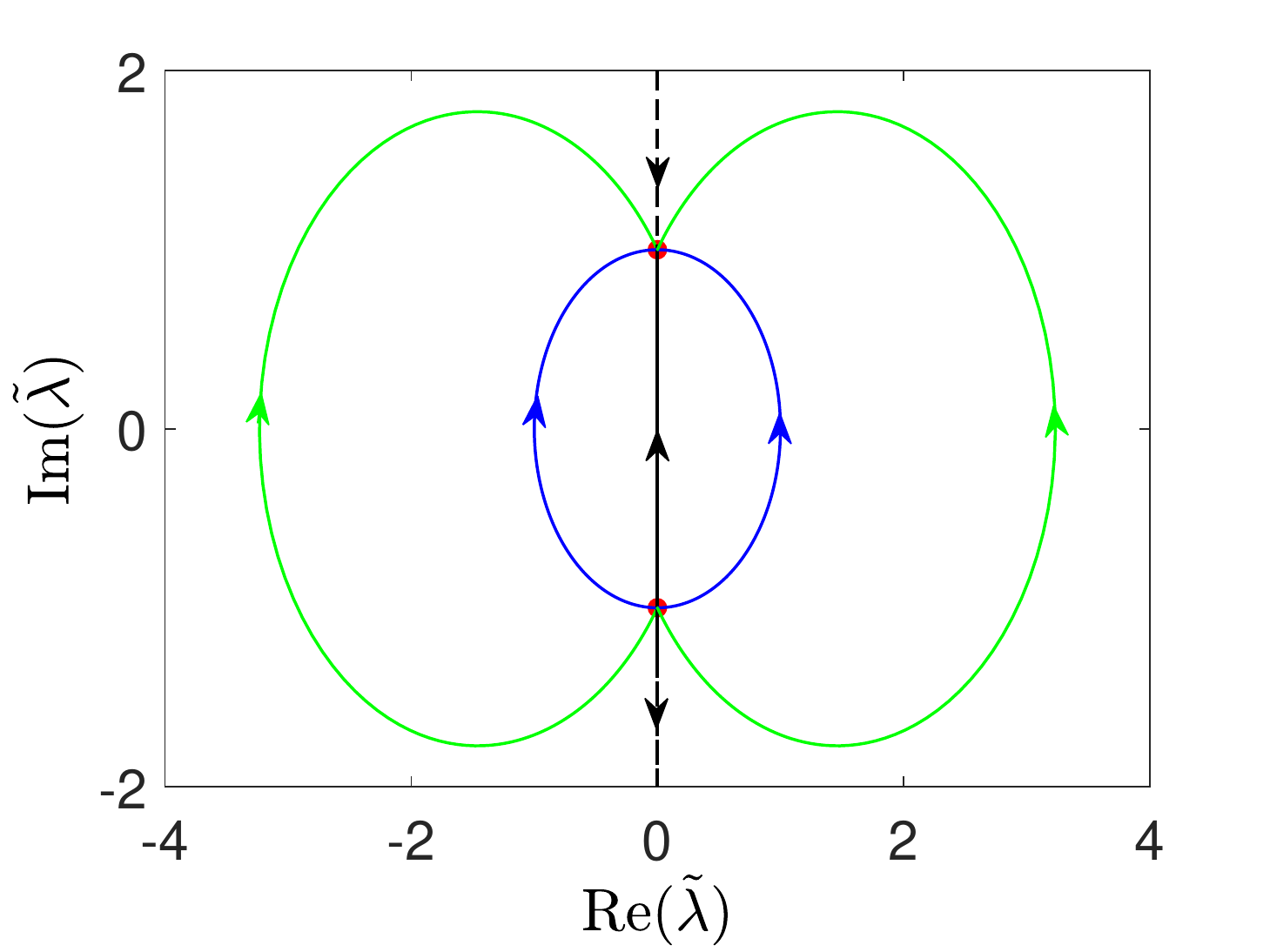}
	\caption{Plot of the trajectories in the complex plane of $\tilde{\lambda}$. The red points correspond to $+i$ and $-i$. The blue and green curves represent respectively the Hamiltonian flows for $\phi=0$, $\pi$ and $\frac{3\pi}{2}-\epsilon$, $-\frac{\pi}{2}+\epsilon$. The solid and dashed black lines depict respectively the trajectories for $\phi=\frac{\pi}{2}$ and $-\frac{\pi}{2}$. The trajectories converge to the points $\pm i$ when $t\to \pm \infty$. Numerical values are set to $\rho=0.1$, $a_0=0.5$ and $\epsilon=0.6$.}
	\label{figquasi}
\end{figure}

We consider now a small circle, $\Gamma$, of radius $\rho$ in the space of parameters $(h,k)$ starting in $h=0$, $k<0$ and positively oriented. The previous analysis gives the behavior of $\tilde{\gamma}_{\textrm{rel}}$ when $(h,k)$ vary along $\Gamma$. These results are illustrated numerically in Fig.~\ref{figquasi}. Thus, we have all the tools in hand to compute the variation of the rotation number along $\Gamma$
\begin{equation*}
\Delta_{\Gamma}\Theta=\int_{\Delta_{\Gamma}\tilde{\gamma}_{\textrm{rel}}}\frac{-\tilde\lambda}{i(1+\tilde{\lambda}^2)}d\tilde\lambda.
\end{equation*}
Note that the one-form has a pole at infinity and for the value $h=0$, $k<0$, $\tilde{\gamma}_{\textrm{rel}}$ goes through this point. For this reason, we parameterize $\Gamma$ as $h+ik=\rho e^{i\phi}$ with $\phi\in[-\frac{\pi}{2},\frac{3\pi}{2}]$ and we consider the limit
\begin{align*}
\Delta_{\Gamma}\Theta&=\lim_{\epsilon\to 0^+}\left(\Theta(\rho e^{i(\frac{3\pi}{2}-\epsilon)})-\Theta(\rho e^{i(\frac{-\pi}{2}+\epsilon)})\right)\\
&=\lim_{\epsilon\to 0^+}\left(\int\limits_{\Delta_\epsilon\Gamma_{rel}}\frac{-\tilde\lambda}{i(1+\tilde{\lambda}^2)}d\tilde\lambda\right)
\end{align*}
where $\Delta_\epsilon\Gamma_{\textrm{rel}}=\tilde{\gamma}_{\textrm{rel}}(\rho e^{i(\frac{3\pi}{2}-\epsilon)})-\tilde{\gamma}_{rel}(\rho e^{i(\frac{-\pi}{2}+\epsilon)})$.

\begin{figure}[h]
	\centering
	\includegraphics[width = 0.55\textwidth]{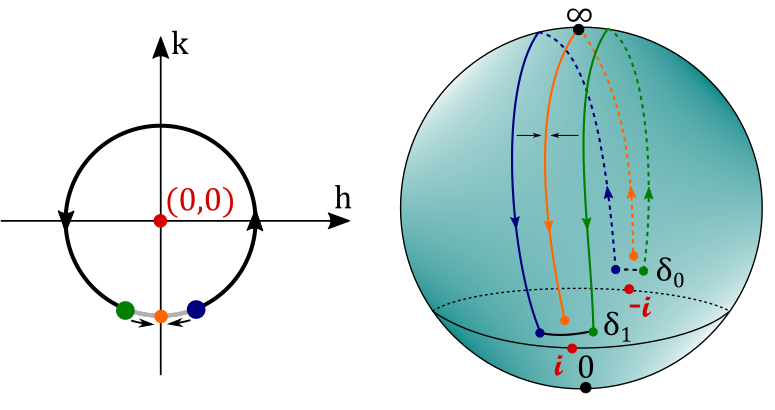}
	\caption{The left panel depicts the loop $\Gamma$ in the space $(h,k)$. The position of the blue, green, yellow and red points are respectively $\rho e^{i(-\frac{\pi}{2}+\epsilon)}$, $\rho e^{i(\frac{3\pi}{2}-\epsilon)}$, $\rho e^{\frac{3i\pi}{2}}$ and $0$. The corresponding trajectories are represented schematically on the Riemann sphere in the right panel with the same color code. The segments $\delta_0$ and $\delta_1$ joining respectively the initial and final points of the trajectories are also plotted (see the text for details).}
	\label{figrisphere}
\end{figure}
Let $\delta_0$ be the segment starting and ending respectively at the initial points of $\tilde{\gamma}_{rel}(\rho e^{i(\frac{-\pi}{2}+\epsilon)})$ and $\tilde{\gamma}_{rel}(\rho e^{i(\frac{3\pi}{2}-\epsilon)})$ and $\delta_1$ the one starting and ending respectively at the final points of $\tilde{\gamma}_{rel}(\rho e^{i(\frac{3\pi}{2}-\epsilon)})$ and $\tilde{\gamma}_{rel}(\rho e^{i(\frac{-\pi}{2}+\epsilon)})$. The circle $\Gamma$ in the space $(h,k)$ and the different paths on the Riemann sphere are represented schematically  in Fig.~\ref{figrisphere}. Numerical results are given in Fig.~\ref{figquasi2}.
\begin{figure}[h]
	\centering
	\includegraphics[width = 0.55\textwidth]{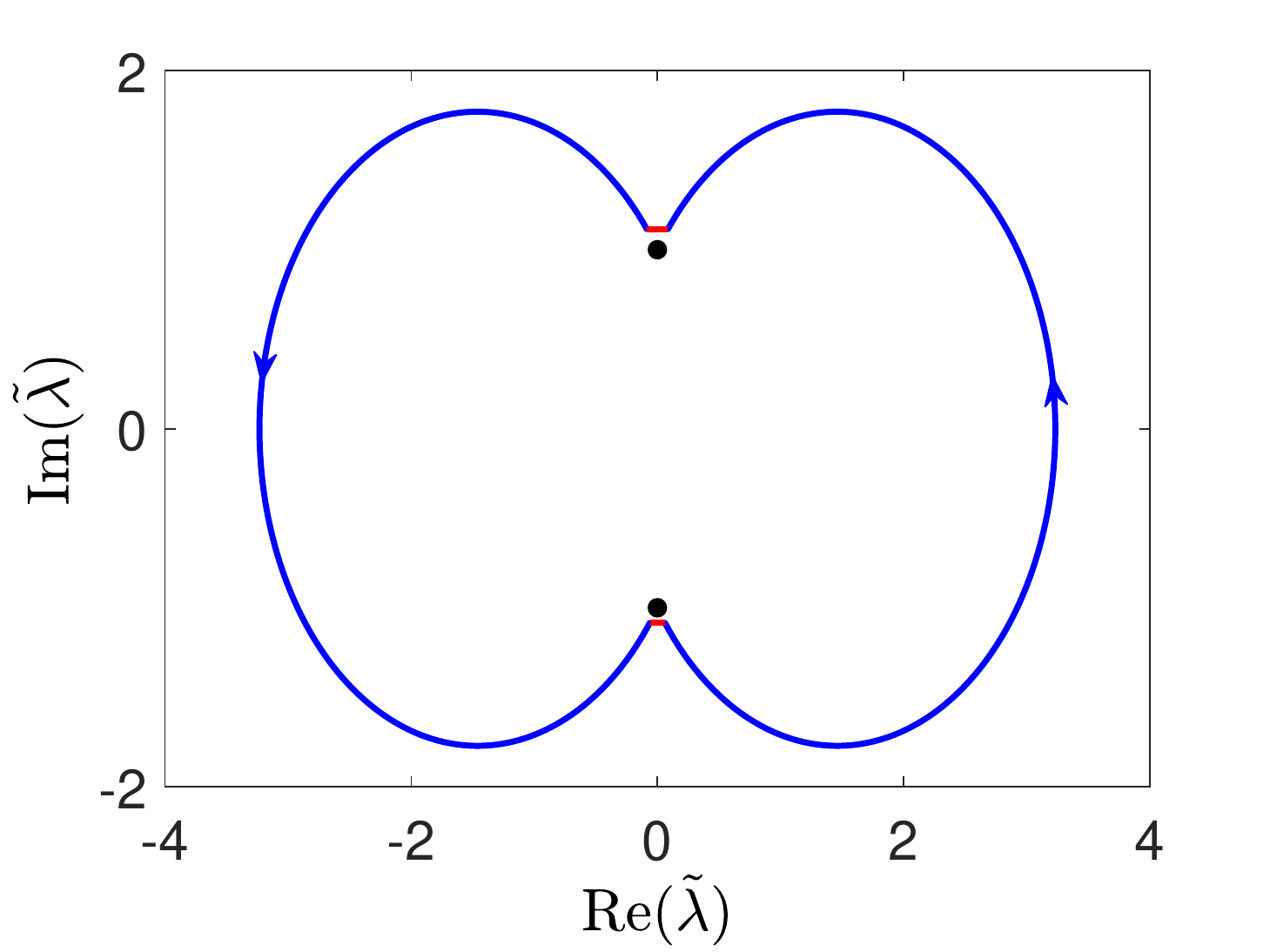}
	\caption{Plot of the cycle $\delta_\epsilon$. The relative cycles and the segments $\delta_0$, $\delta_1$ are respectively represented in blue and red. The black dots correspond to $\pm i$. Numerical values are set to $\rho=0.1$, $a_0=0.4$ and $\epsilon=0.6$. The neighborhood $V$ is a ball of radius 1.}
	\label{figquasi2}
\end{figure}
The segments $\delta_0$ and $\delta_1$ are contained in the domain of analyticity of the function $\log(1+\tilde\lambda^2)$, where $\log$ is taken in the $(0,2\pi)$- branch. This domain is $\CC\setminus\RR\cup i[-1,1]$. We have
\begin{equation*}
\lim_{\epsilon\to 0^+}\left( \int_{\delta_j}\frac{-\tilde\lambda}{i(1+\tilde{\lambda}^2)}d\tilde\lambda\right)=0 ,
\end{equation*}
with $j=0,1$. Using this result, we obtain
\begin{equation*}
\Delta_{\Gamma}\Theta=\lim_{\epsilon\to 0^+}\left( \int_{\delta_\epsilon}\frac{-\tilde\lambda}{i(1+\tilde{\lambda}^2)}d\tilde\lambda\right) ,
\end{equation*}
where $\delta_\epsilon$ is $\tilde{\gamma}_{rel}(\rho e^{i(\frac{3\pi}{2}-\epsilon)})*\delta_1*-\tilde{\gamma}_{rel}(\rho e^{i(\frac{-\pi}{2}+\epsilon)})*\delta_0$, $*$ corresponding to the concatenation of paths.

Finally, we get for $\epsilon>0$
\begin{align*}
\int_{\delta_\epsilon}\frac{-\tilde\lambda}{i(1+\tilde{\lambda}^2)}d\tilde\lambda&=2\pi i\times\textrm{Res}\,\left(\frac{-\tilde{\lambda}}{i(1+\tilde{\lambda}^2)}d\tilde{\lambda},\tilde{\lambda}=\infty\right)\\
&=2\pi ,
\end{align*}
which leads to
\begin{equation*}
\Delta_{\Gamma}\Theta=\lim_{\epsilon\to 0^+}(2\pi)=2\pi.
\end{equation*}

We conclude that the Monodromy matrix of any system with a focus-focus singularity can be computed from this quasi-Lax pair and the corresponding change of variables defined by  $\tilde{\lambda}$ and $\tilde{\mu}$.

\section{Concluding remarks and prospectives}\label{sec.conclu}
Through the analysis of the complex geometry of a Riemann surface given by the Lax pair of a Hamiltonian system, we have shown that Hamiltonian Monodromy can be completely described by the properties of this surface defined by a polynomial of degree four. More precisely, the Monodromy matrix can be computed from a meromorphic form of this surface with a pole at infinity. The general approach is illustrated by two relevant examples, namely the Jaynes-Cumming model and the spherical pendulum for which a Lax pair is known. However, the main weak point of this method is the need to know a Lax pair. This description has been achieved only for few integrable systems, while Hamiltonian Monodromy and its generalizations appear generically in such systems, for instance in any system with a focus-focus singularity. We propose to answer this question by deriving a quasi-Lax pair, i.e. a Lax pair valid (up to higher order terms) in a small neighborhood of the singularity, which allows us to apply with some adaptations the general results of this study. This idea originates from the fact that Hamiltonian Monodromy is a local phenomenon that does not depend on the global Hamiltonian dynamics. A precise mathematical formulation of quasi-Lax pair would be a valuable tool in the study of Hamiltonian singularities.

In addition, this work paves the way for solving other problems in the same framework. A key advantage of Lax pairs is that they allow the description of integrable Hamiltonian systems with a large number of degrees of freedom. An example is given by the Tavis-Cummings system which generalizes the JC model to the case of several spins with different frequencies~\cite{babelon2015}. An extension of our method to such systems may lead to the derivation of high-dimensional Monodromy matrices which characterize their singularities. So far more traditional methods have generally only been used for systems up to three-degree of freedom, and it seems difficult to go further.

\noindent\textbf{Acknowledgments.}\\
This work was supported by the EIPHI Graduate School (contract ANR-17-EURE-0002), the Region and the University of Bourgogne-Franche-Comt\'e. We thank H. R. Jauslin for helpful comments.

\appendix

\section{Hamiltonian Monodromy in the JC model}\label{appA}
We propose in this section another way to compute the expression of the one-form $d\theta$ using the relation between the original coordinates and the new set of variables $(\tilde{\lambda},\tilde{\mu})$. To this aim, we establish the expression of the Liouville two-form $\Omega$ in the new coordinates. We start from
\begin{equation*}
\Omega=idb\wedge d\bar{b}-\frac{i}{S_+}dS_+\wedge dS_z .
\end{equation*}
Using $S_+=-\frac{2}{g}\bar{b}(\tilde{\lambda}-\omega_0)$, we have
\begin{equation*}
\frac{dS_+}{S_+}=\frac{d\bar{b}}{\bar{b}}+\frac{d\tilde{\lambda}}{\tilde{\lambda}-\omega_0} .
\end{equation*}
Since
\begin{equation*}
-id\ln\bar{b}\wedge dK=idb\wedge d\bar{b}-i\frac{d\bar{b}}{\bar{b}}\wedge dS_z,
\end{equation*}
and
\begin{equation*}
d\tilde{\lambda}\wedge d\tilde{\mu} =d\tilde{\lambda}\wedge dS_z,
\end{equation*}
we obtain
\begin{equation*}
\Omega=-id\ln\bar{b}\wedge dK-\frac{i}{\tilde\lambda-\omega_0}d\tilde{\lambda}\wedge d\tilde{\mu},
\end{equation*}
which allows to find the expression of the conjugate angle $\theta$ of $K$ as $\theta=-i\ln\bar{b}$. The Poisson brackets of $\theta$ with $K$, $\tilde{\lambda}$ and $\tilde{\mu}$ can also be directly verified from the expression of $\Omega$, i.e. $\{\theta,K\}=1$ and $\{\theta,\tilde{\lambda}\}=\{\theta,\tilde{\mu}\}=0$.
\section{Hamiltonian Monodromy in the spherical pendulum}\label{appB}
We show in this section how to derive the expression of the rotation number starting from the original coordinates of the phase space.

We first introduce the spherical coordinates $(\vartheta,\varphi)$ defined as
\begin{eqnarray*}
& & x=\sin\vartheta\cos\varphi \\
& & y=\sin\vartheta\sin\varphi \\
& & z=\cos\vartheta
\end{eqnarray*}
Using the corresponding momenta $p_\vartheta$ and $p_\varphi$ such that the only non-zero Poisson brackets are $\{\vartheta,p_\vartheta\}=1=\{\varphi,p_\varphi\}$, we have
\begin{eqnarray*}
& & K_x=-p_\vartheta\sin\varphi-p_\varphi\frac{\cos\varphi}{\tan\vartheta} \\
& & K_y=p_\vartheta\cos\varphi-p_\varphi\frac{\sin\varphi}{\tan\vartheta} \\
& & K_z=p_\varphi
\end{eqnarray*}
Note that straightforward computations lead to
\begin{equation*}
\tilde{\lambda}=\frac{\sin\vartheta}{ip_\vartheta-\frac{p_\varphi}{\tan\vartheta}}
\end{equation*}
The Liouville two-form $\Omega$ can be expressed in these coordinates as follows
\begin{equation*}
\Omega=d\vartheta\wedge dp_\vartheta+d\varphi\wedge dK
\end{equation*}
We consider another two-form given by
\begin{equation*}
\Omega'=\frac{i}{\tilde{\lambda}}d\tilde{\lambda}\wedge d\tilde{\mu}+d\theta\wedge dK
\end{equation*}
where the angle $\theta$ is defined as
\begin{equation}\label{eqthetaSP}
\theta=-i\ln(K_x+iK_y)
\end{equation}
with the properties $\dot{\theta}=\tilde{\lambda}$, $\{\theta,K\}=1$ and $\{\theta,\tilde{\lambda}\}=\{\theta,\tilde{\mu}\}=0$. The angle $\theta$ has all the required properties to be the conjugate angle to $K$ in this new set of coordinates. This result can be explicitly derived by showing that $\Omega=\Omega'$. Starting from $\Omega'$, we have
\begin{equation*}
d\tilde\lambda\wedge d\tilde\mu=d\tilde\lambda\wedge (-\sin\vartheta d\vartheta-\tilde\lambda dK)
\end{equation*}
A straightforward computation leads to
\begin{equation*}
\frac{i}{\tilde\lambda^2}d\tilde\lambda = -\frac{dp_\vartheta}{\sin\vartheta}+i\frac{\cos\vartheta}{\sin^2\vartheta}dK+\frac{i}{\tan\vartheta \tilde\lambda}d\vartheta
-\frac{iK}{\sin^3\vartheta}d\vartheta
\end{equation*}
We deduce that
\begin{equation*}
\frac{i}{\tilde\lambda^2}d\tilde\lambda\wedge d\mu = d\vartheta\wedge dp_\vartheta+\frac{i\tilde\lambda K}{\sin^3\vartheta}d\vartheta\wedge dK
+\frac{\tilde\lambda}{\sin\vartheta}dp_\vartheta\wedge dK
\end{equation*}
We then compute the second term. We have
\begin{equation*}
\theta=-i\ln(\frac{K}{\tan\vartheta}+ip_\vartheta)+\varphi
\end{equation*}
which leads to
\begin{equation*}
d\theta = \frac{-i}{\frac{K}{\tan\vartheta}+ip_\vartheta}(-\frac{K}{\sin^2\vartheta}d\vartheta+idp_\vartheta)+d\varphi
\end{equation*}
and
\begin{equation*}
d\theta\wedge dK=d\varphi\wedge dK-\frac{i\tilde\lambda K}{\sin^3\vartheta}d\vartheta\wedge dK-\frac{\tilde\lambda}{\sin\vartheta}dp_\vartheta\wedge dK
\end{equation*}
We finally get that $\Omega=\Omega'$. We conclude that $\theta$ is given by Eq.~\eqref{eqthetaSP}, which then allows to compute the rotation number $\Theta$.

\bibliographystyle{unsrt}




\end{document}